\theoremstyle{definition}
\newtheorem{protocol}{Protocol} 
\newtheorem{reduction}{Reduction}  
\newtheorem{theorem}{Theorem}[section]
\newtheorem{lemma}[theorem]{Lemma}
\newtheorem{prop}[theorem]{Proposition}
\newtheorem{corollary}[theorem]{Corollary}
\theoremstyle{definition}
\newtheorem{definition}[theorem]{Definition}
\newtheorem{remark}[theorem]{Remark}
\newcommand{\A}{\mathcal{A}}
\newcommand{\B}{\mathcal{B}}
\newcommand{\kb}[1]{\ket{#1}\bra{#1}} 
\newcommand{\Tr}{\mathrm{Tr}}
\begin{document}

\title{\bf Quantum protocols for Rabin oblivious transfer}
\author[3]{Erika Andersson\footnote{\href{mailto:E.Andersson@hw.ac.uk}{\texttt{E.Andersson@hw.ac.uk}}}}
\author[1]{Akshay Bansal\footnote{\href{mailto:akshaybansal14@gmail.com}{\texttt{akshaybansal14@gmail.com}}}}
\author[3]{James T. Peat\footnote{\href{mailto:jtp2000@hw.ac.uk}{\texttt{jtp2000@hw.ac.uk}}}}
\author[1]{Jamie Sikora\footnote{\href{mailto:sikora@vt.edu}{\texttt{sikora@vt.edu}}}}
\author[2]{Jiawei Wu\footnote{\href{mailto:constchar0212@gmail.com}{\texttt{constchar0212@gmail.com}}}}

\affil[1]{Department of Computer Science, Virginia Polytechnic Institute and State University, Blacksburg, VA, USA}
\affil[2]{Centre for Quantum Technologies, National University of Singapore, Singapore}
\affil[3]{SUPA, Institute of Photonics and Quantum Sciences, School of Engineering and Physical Sciences, Heriot-Watt University, Edinburgh, Scotland, UK}

\date{\today}

\maketitle

\begin{abstract}
Rabin oblivious transfer is the cryptographic task where Alice wishes to receive a bit from Bob but it may get lost with probability $1/2$. 
In this work, we provide protocol designs which yield quantum protocols with improved security. 
Moreover, we provide a constant lower bound on any quantum protocol for Rabin oblivious transfer. 
To quantify the security of this task with asymmetric cheating definitions, we introduce the notion of cheating advantage which may be of independent interest in the study of other asymmetric cryptographic primitives. 
\end{abstract}

\section{Introduction}\label{sec:introduction}
Cryptographic tasks where an adversary remains oblivious to the input or the data of other parties are popularly studied under the banner of \emph{oblivious transfer}.
Several variants of oblivious transfer tasks exist in the literature today with varying goals and security definitions.
A commonly studied variant of oblivious transfer, known as 1-out-of-2 oblivious transfer (OT), is a cryptographic primitive involving two parties, Alice and Bob, where Bob has two bits of information, and Alice wishes to learn one of them based on her choice of an index bit.
The security of 
OT protocols is usually a two-fold evaluation where one needs to protect honest Bob from a cheating Alice who may wish to obtain more information that the protocol allows (perhaps by learning both bits), meanwhile Bob may wish to learn Alice's input, that is, which bit she wishes to learn (see~\Cref{def:1-out-of-2 OT} for a formal definition of 1-out-of-2 OT and its underlying security notions).

Oblivious transfer, its variants, and other simple cryptographic tasks, are referred to as primitives as they could be used as building blocks for other useful tasks such as coin flipping~\cite{kitaev2002quantum, mochon2007quantum, arora2021analytic}, bit commitment~\cite{chailloux2011}, oblivious circuit evaluation~\cite{bennett1991practical,crepeau1994quantum}, and even general multiparty computation~\cite{kilian1988founding,keller2016mascot}.

In this work, we also consider the task of coin flipping where Alice and Bob wish to flip a coin but have a different preferred choices for the outcome. 
It has been shown that quantum protocols exist for coin flipping up to arbitrary levels of security~\cite{mochon2007quantum} with explicit constructions given in~\cite{arora2021analytic,arora2024protocolsquantumweakcoin}. 
This has been generalized to construct secure $z$-balanced coin flipping in~\cite{chailloux2009optimal} where the two parties output heads with probability $z$ and tails with the remaining probability of $(1-z)$ securely. 
Moreover, optimal protocols for the \emph{semi-honest} version of 1-out-of-2 OT were developed in~\cite{chailloux2013optimal}, which also provides an the best known lower bound on the security of 1-out-of-2 oblivious transfer as a corollary.  

The focus of this work is to examine quantum protocols for a variant of oblivious transfer known as Rabin oblivious transfer.  
It has been shown that Rabin oblivious transfer is equivalent to 1-out-of-2 oblivious transfer~\cite{Crepeau1988equivalence}, and that 1-out-of-2 oblivious transfer is universal~\cite{kilian1988founding}. 
However, there are only a few papers that studied Rabin oblivious transfer in a quantum setting~\cite{bansal2023breaking,stroh2024quantum} where the aim is to study its information-theoretic security. 
The goal of our work is to provide improved bounds on its overall security by finding better protocols as well as lower bounds on how secure they can be.

\subsection{Rabin oblivious transfer}\label{sub:definitionRabinOT}

In this work, we consider a variant of oblivious transfer, known as \emph{Rabin oblivious transfer}, denoted by ROT. 
This task was first proposed by Rabin in~\cite{rabin1979digitalized,Rabin05}, whence the name. 
The task was later revisited in~\cite{fischer1996secure} to give a protocol that guarantees security under certain computational hardness assumptions. 
ROT is the cryptographic task which accomplishes the following:   
\begin{itemize}
    \item At the beginning of the protocol, Bob outputs a uniformly random bit $y \in \{ 0, 1 \}$. 
    \item At the end of the protocol, Alice receives $y$ with probability $1/2$, else the bit is lost. 
\end{itemize}

The security of ROT protocols is defined in a two-fold manner. 
Dishonest Alice wishes to maximize her chances of learning the bit $y$ while dishonest Bob wishes to maximize his chances of successfully guessing whether Alice \emph{asserts} the receipt of the data or the data loss event.
Note that if Bob cheats, Alice might think that she learned the bit $y$, but the learned bit could very well be different from $y$. 
Thus, a cheating Bob only cares if Alice \emph{thinks} she correctly learned $y$. 
Note that in contrast to the analysis described in~\cite{fischer1996secure}, we work under the regime of information-theoretic security where the adversaries are not bounded under any computational assumptions.  

We now provide a more formal definition. 
\begin{definition}[Rabin OT] Rabin OT (ROT) is the cryptographic task between two parties (Alice and Bob) such that:
    \begin{itemize}
    \item Bob chooses $y \in \{0,1\}$ uniformly at random.
    \item At the end of the protocol, Alice outputs a bit $g \in \{ 0, 1 \}$ with probability $1/2$ (which corresponds to her asserting that she learned the bit $y$) or NULL with probability $1/2$ (which corresponds to her losing the bit). 
\end{itemize}
\end{definition}

The completeness and security notions for an ROT protocol are below. 
\begin{itemize}
    \item \emph{Completeness:} If both Alice and Bob are honest, neither party aborts. Alice outputs $g = y$ with probability $1/2$ or NULL with probability $1/2$.
    \item \emph{Cheating Alice:} If Bob is honest, then dishonest Alice's cheating probability is defined as
    \begin{equation*}
        P_A^* = \max \{ \Pr[\text{Alice successfully learns $y$}] \},
    \end{equation*}
    where the maximum is taken over all cheating strategies of Alice. 
Note that $P_A^* \geq 3/4$ as Alice can simply choose to follow the protocol to learn $y$ with probability $1/2$ and randomly guess $y$ in the event of NULL.
    \item \emph{Cheating Bob}:  If Alice is honest, then dishonest Bob's cheating probability is defined as 
    \begin{equation*}
        P_B^* = \max \{ \Pr[\text{Bob correctly guesses whether Alice \emph{asserts} the receipt of $y$ or NULL}] \}, 
    \end{equation*}
    where the maximum is taken over all cheating strategies of Bob. 
    More formally, if we denote Alice's assertion by bit $a$, then $a = 0$ denotes that Alice asserted the receipt of $y$, while $a = 1$ denote that she asserted the receipt of NULL. Here, $P_B^*$ could be alternatively expressed as the maximum probability with which Bob correctly learns $a$.
    
    We emphasize that Alice asserting (possibly privately) the receipt of $y$ (by outputting $g$) does not necessarily imply the receipt of the actual $y$, i.e., it could be the case that $g \neq y$, or even that $y$ does not even exist.
    Note that $P_B^* \geq 1/2$ as Bob can simply choose to follow the protocol honestly and make a uniformly random guess to whether Alice asserts $y$ or NULL. 
\end{itemize}

As the cheating notions of Alice and Bob in the ROT are asymmetric, especially with respect to the lower bounds of $3/4$ and $1/2$ on their respective cheating probabilities, we define a single security measure of how good an ROT protocol is with respect to its pair of cheating probabilities $(P_A^*, P_B^*)$. 
To achieve this, we introduce the concept of the \emph{cheating advantage} of a protocol, denoted as $\gamma$, defined as: 
\begin{equation}\label{eq:defCheatingAdvantage}
    \begin{aligned}
        \gamma & := \max \{ \gamma_A, \gamma_B \}, \\ 
        \gamma_A & := P_A^*/P_A^{ideal} \quad \text{ where } \quad P_A^{ideal} = 3/4, \quad \text{ and } \\ 
        \gamma_B & := P_B^*/P_B^{ideal} \quad \text{ where } \quad P_B^{ideal} = 1/2. 
    \end{aligned}
\end{equation}

The ultimate objective of designing ROT protocols is thus to find one which \emph{minimizes the cheating advantage}, with the hopeful goal of making $\gamma$ as close to $1$ as possible. 
(We note that $\gamma \leq 2$ since both $P_A^*$ and $P_B^*$ are no greater than $1$.) 

We remark that this notion of \emph{cheating advantage} is similar to what is done in~\cite{chailloux2013lower} for the case of $k$-out-of-$n$ (forcing) oblivious transfer which also has vastly different ideal cheating probabilities for Alice and Bob.  
We believe that this is a natural security measure that can be adopted by other primitives, even symmetric ones. 
However, since we are dealing with two cheating parameters, it could be the case that some applications would desire to minimize a different function of these two probabilities, different than the cheating advantage. 
While we do look at cheating advantage in this work, we note that our focus is on the design of protocols that minimize both the cheating probabilities, and the cheating advantage is just a way to assign a quality measure to a given protocol.

\subsection{Prior work on quantum protocols for Rabin oblivious transfer and related cryptographic tasks}\label{sub:priorWork} 

A first quantum ROT protocol with non-trivial security guarantees was given in~\cite{bansal2023breaking} where a couple different variants in the security definition were considered including the one described in this work. 
Specifically, the authors first developed a framework of stochastic switching and then later used it to combine a couple of different insecure Rabin OT protocols to yield one with better security.
This protocol yields $\gamma_A = 1.29$ and $\gamma_B = 1.87$, implying $\gamma = 1.87 < 2$ for the security definition considered in~\Cref{sub:definitionRabinOT}.

In~\cite{stroh2024quantum}, a quantum ROT protocol using two pure states is investigated, and shown to outperform classical protocols in some parameter regimes.
Also, it was previously claimed that there exists a perfectly secure quantum protocol for Rabin OT~\cite{he2006oblivious}. However, this protocol turned out to be vulnerable to cheating~\cite{peat2024cheating}.
In fact, a constant lower bound on the overall security of a variant of ROT in the setting of secure function evaluation (with additional inputs) can be shown from the general lower bound in~\cite{osbornA} and is given explicitly in the thesis~\cite[Section 4.7]{osborn2022constant}, hinting at the existence of general lower bounds for ROT.

If one makes no assumptions on the cheating powers of Alice and Bob, for example, that they are computationally bounded, the limits on the unconditional security for 1-out-of-2 OT was studied in~\cite{chailloux2013lower} which provided a constant lower bound on the worst-case security for all complete\footnote{Complete protocols are defined such that honest participants achieve the desired objective.} protocols. 
This result extends the no-go result in~\cite{lo1997insecurity} that shows that general quantum computations cannot be done securely, which includes the special case of OT.
The lower bound was later improved in~\cite{chailloux2013optimal} which showed that a variant of OT, known as \emph{semi-honest} oblivious transfer, where dishonest Alice is restricted to the set of strategies that allows her to learn the chosen bit perfectly. 

More recently, a lower bound on the security of 1-out-of-2 OT~\cite{amiri2021imperfect} has been used to deduce fundamental trade-offs between circuit privacy, data privacy and correctness for a broad family of quantum homomorphic encryption protocols~\cite{hu2023privacy}. 

In terms of finding protocols, it was shown in~\cite{chailloux2013lower} that partial security for the 1-out-of-2 OT task can be obtained, that is, the maximum probability with which Alice or Bob could cheat successfully is strictly below $1$, a strict (quantum) improvement over classical protocols where it is known that either Alice or Bob can always cheat perfectly. 
Finding optimal protocols for 1-out-of-2 OT is still an interesting, and important, open problem.

\subsection{A few remarks about the security setting considered in this work}\label{sub:remarksSecurityDef}

Here we discuss how our study of Rabin oblivious transfer fits into the bigger picture of two-party quantum cryptography. 

First and foremost, we are considering \emph{information-theoretic} security, where we do not place any restriction on a cheating party. 
For example, we do not make any assumptions on computational hardness~\cite{yao1986how,goldreich1987how}, space limitations~\cite{schaffner2007cryptographyboundedquantumstoragemodel,damgard2008bounded}, noisy hardware~\cite{wehner2008noisy,schaffner2009robust}, etc. 
While these are interesting in their own right, and can be well-motivated, we are studying this task in the most demanding setting possible, that is, when Alice and Bob are only bounded by the laws of quantum mechanics. 

Secondly, we are assuming that all communication channels are noiseless as the problem could become easier in some noisy settings. 
Consider a binary erasure channel which transmits a bit $x$ and, with probability $p$, the bit is replaced with a flag $\perp$. 
Then for $p = 1/2$, this can accomplish exactly Rabin oblivious transfer. 
Moreover, Alice and Bob cannot cheat in this setting. 
\emph{In this sense, Rabin oblivious transfer seeks to emulate a binary erasure channel where the noise is controlled by Alice or Bob, not the environment.}  
Indeed, it is possible to efficiently implement other tasks such as bit commitment and oblivious transfer in the noisy setting~\cite{crepeaunoise97}. 

Lastly, we would like to emphasize that unlike some of the previous works that studied the composability of various two-party tasks such as coin flipping~\cite{Vilasini2019,wu2024composablesecurityweakcoin}, here we only consider the standalone security of the primitives relevant to this work, including Rabin oblivious transfer, 1-out-of-2 oblivious transfer and coin flipping. Therefore, the protocols we propose in the later sections of this work may not be necessarily secure under arbitrary compositions, and we leave this research direction for future work.

\subsection{New protocols for Rabin oblivious transfer}\label{section:protocolsRabinOT}

In this section, we present novel constructions of Rabin oblivious transfer protocols that offer a strict improvement in overall security compared to those described in previous works~\cite{bansal2023breaking,stroh2024quantum}. Here, we adopt the security notion discussed in~\Cref{sub:definitionRabinOT}.

\subsubsection{A quantum protocol via weak coin flipping}\label{subsub:badProtocolBalance} 
  
One way to reduce the cheating advantages of many two-party cryptographic protocols is to balance them using \emph{weak coin flipping}. 
Suppose that you are given two protocols \emph{for the same task} $\mathcal{P}_1$ with $\mathcal{P}_2$ and suppose that Alice can cheat more in $\mathcal{P}_1$ than in $\mathcal{P}_2$ and Bob can cheat more in $\mathcal{P}_2$ than in $\mathcal{P}_1$. 
In other words, if Alice and Bob were faced with deciding on whether to use $\mathcal{P}_1$ or $\mathcal{P}_2$, then they would want opposing decisions. 
One way to deal with this is to have them \emph{flip a coin}, and if they output HEADS then use $\mathcal{P}_1$, else use  $\mathcal{P}_2$. 
This also has the potential benefit of reducing the overall cheating, as described in the following lemma. 

\begin{lemma}[Balancing Lemma] \label{lemma:balancing}
Suppose we are given two quantum ROT protocols with $\mathcal{P}_1$ with cheating advantages $(\gamma_A^{(1)}, \gamma_B^{(1)})$ and $\mathcal{P}_2$ with cheating advantages $(\gamma_A^{(2)}, \gamma_B^{(2)})$. 
If Alice prefers $\mathcal{P}_1$ (i.e., $\gamma_A^{(1)} \geq \gamma_A^{(2)}$), Bob prefers $\mathcal{P}_2$ (i.e., $\gamma_B^{(1)} \leq \gamma_B^{(2)}$), and $0 < \frac{\gamma_B^{(2)} - \gamma_A^{(2)}}{\gamma_B^{(2)} - \gamma_A^{(2)} + \gamma_A^{(1)} - \gamma_B^{(1)}} < 1$ 
then there exists a family of quantum ROT protocols approaching 
\begin{equation} 
\gamma = \gamma_A = \gamma_B = \frac{\gamma_A^{(1)} \gamma_B^{(2)} - \gamma_B^{(1)} \gamma_A^{(2)}}{\gamma_B^{(2)} - \gamma_A^{(2)} + \gamma_A^{(1)} - \gamma_B^{(1)}} \leq \min\{\gamma_1, \gamma_2\},
\end{equation} 
where $\gamma_1 = \max\{\gamma_A^{(1)}, \gamma_B^{(1)}\}$ and $\gamma_2 = \max\{\gamma_A^{(2)}, \gamma_B^{(2)}\}$.
The improvement is strict, i.e., we have $\gamma < \min\{\gamma_1, \gamma_2\}$ under the previous conditions if and only if Alice has a strict preference for $\mathcal{P}_1$ ($\gamma_A^{(1)} > \gamma_A^{(2)}$) and Bob strictly prefers $\mathcal{P}_2$ ($\gamma_B^{(1)} < \gamma_B^{(2)}$).
\end{lemma}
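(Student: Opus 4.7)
The plan is to construct the combined protocol by running a $z$-biased weak coin flip (WCF) as a preamble: execute the WCF, use $\mathcal{P}_1$ if the outcome is HEADS, and use $\mathcal{P}_2$ if it is TAILS. Concretely, I would invoke the $z$-balanced WCF construction obtained by combining Mochon's arbitrarily-biased WCF~\cite{mochon2007quantum,arora2021analytic} with the $z$-balancing construction of~\cite{chailloux2009optimal}: for every $z\in(0,1)$ and $\varepsilon>0$, there is a protocol whose honest outcome is HEADS with probability $z$, where a cheating Alice preferring HEADS achieves HEADS with probability at most $z+\varepsilon$ and a cheating Bob preferring TAILS achieves TAILS with probability at most $1-z+\varepsilon$. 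Completeness of the composed ROT is immediate: each branch is an honest ROT, so Alice still outputs $g=y$ with probability $1/2$ and NULL with probability $1/2$.

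The core analytical step is bounding the two cheating probabilities. Since $P_A^{*(1)}\ge P_A^{*(2)}$ by hypothesis, Alice's expected success is a non-decreasing linear function of the probability of HEADS, so her optimal strategy is to maximally bias the WCF toward HEADS and then execute an optimal cheat in the chosen branch, yielding
\begin{equation*}
P_A^* \;\le\; (z+\varepsilon)\,P_A^{*(1)} + (1-z-\varepsilon)\,P_A^{*(2)}.
\end{equation*}
A symmetric argument for Bob, who prefers TAILS, gives $P_B^*\le (1-z+\varepsilon)P_B^{*(2)}+(z-\varepsilon)P_B^{*(1)}$. Dividing by the ideal values $3/4$ and $1/2$ and letting $\varepsilon\to 0$ produces the two bounds $\gamma_A\le z\gamma_A^{(1)}+(1-z)\gamma_A^{(2)}$ and $\gamma_B\le z\gamma_B^{(1)}+(1-z)\gamma_B^{(2)}$.

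Next I would choose $z$ to equate these two upper bounds. Solving the linear equation gives the unique value $z^* = (\gamma_B^{(2)}-\gamma_A^{(2)})/(\gamma_B^{(2)}-\gamma_A^{(2)}+\gamma_A^{(1)}-\gamma_B^{(1)})$, which lies in $(0,1)$ by hypothesis, and substituting back yields the claimed closed form for $\gamma$. The bound $\gamma\le\min\{\gamma_1,\gamma_2\}$ follows because $\gamma_A^{(2)}\le\gamma_A^{(1)}\le\gamma_1$ forces the convex combination to be at most $\gamma_1$, and analogously for $\gamma_2$. Strictness under $\gamma_A^{(1)}>\gamma_A^{(2)}$ and $\gamma_B^{(1)}<\gamma_B^{(2)}$ comes from $z^*\in(0,1)$ being strict, so both convex combinations are strictly less than their larger endpoints; the converse is immediate because an equality on either side means the coin flip cannot improve that party's cheating advantage.

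The main obstacle is justifying the linear cheating bound after composition. A cheater in the composed protocol may entangle her WCF registers with the subprotocol cheating apparatus, so one cannot naively multiply probabilities. The cleanest way to handle this is to argue that the WCF output is a classical random variable whose marginal is controlled by the WCF security guarantee (at most $z+\varepsilon$ on HEADS for Alice), and that, conditioned on each classical outcome, the resulting subprotocol is an honest instance of $\mathcal{P}_i$ from Bob's perspective, so the standalone cheating bound $P_A^{*(i)}$ applies. This requires care because the paper only considers standalone security, but since the WCF preamble finishes with an agreed classical bit before the subprotocol begins, the argument reduces to a conditional expectation over two independent standalone executions, which is the delicate but tractable point to pin down.
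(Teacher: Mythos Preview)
Your proposal is correct and follows the same approach as the paper: run a $z$-biased weak coin flip (built from~\cite{mochon2007quantum,arora2021analytic} via the unbalanced construction of~\cite{chailloux2009optimal}) as a preamble, branch to $\mathcal{P}_1$ or $\mathcal{P}_2$ on HEADS/TAILS, and choose $z=z^*$ to equalize the two convex-combination bounds. Your write-up is in fact more careful than the paper's short proof sketch, since you explicitly carry the $\varepsilon$ from the WCF guarantee, argue the strictness direction, and flag the composability issue that the paper leaves implicit.
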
 
\begin{remark}
    This balancing lemma above can in general be used for any two protocols for a given two-party task (coin flipping, bit commitment, etc.)~to potentially yield another protocol with improved security. 
    We note that this trick uses a quantum protocol for coin flipping for the information-theoretic security setting. 
    In other settings, one must be mindful of the coin flipping protocol security when balancing in this way.
\end{remark}

We note that if both Alice and Bob prefer the same protocol over the other, e.g., $\gamma_A^{(1)} \leq \gamma_A^{(2)}$ and $\gamma_B^{(1)} \leq \gamma_B^{(2)}$, then this balancing lemma does not offer any advantages. 
However, its power comes from the ability to combine protocols for which Alice and Bob have different preferences.

\begin{proof}[Proof Sketch]
The proof follows immediately using the task of \emph{unbalanced weak coin flipping} for which optimal quantum protocols are known~\cite{chailloux2009optimal}, based on optimal weak (balanced) coin flipping protocols~\cite{mochon2007quantum, arora2021analytic}.   
Specifically, for the $z$-unbalanced weak coin flipping task between Alice and Bob, for all choices of $\epsilon > 0$, there exist a sequence of protocols which in the limit have the following features: 
\begin{itemize} 
\item The probability that the outcome is HEADS is $z$ when Alice and Bob are honest, and they both output the same coin flip; 
\item Alice cannot force the outcome HEADS with probability greater than $z$; 
\item Bob cannot force the outcome TAILS with probability greater than $1-z$.  
\end{itemize} 

Suppose Alice and Bob use a $z$-unbalanced weak coin flipping protocol with $z = \frac{\gamma_B^{(2)} - \gamma_A^{(2)}}{\gamma_B^{(2)} - \gamma_A^{(2)} + \gamma_A^{(1)} - \gamma_B^{(1)}}$, then on observing the outcome HEADS, they perform $\mathcal{P}_1$ and on outcome TAILS then perform $\mathcal{P}_2$. 
The calculation of $\gamma_A$ or $\gamma_B$ follows directly by calculating $z\gamma_A^{(1)} + (1-z)\gamma_A^{(2)}$ or $z\gamma_B^{(1)} + (1-z)\gamma_B^{(2)}$ for $z = \frac{\gamma_B^{(2)} - \gamma_A^{(2)}}{\gamma_B^{(2)} - \gamma_A^{(2)} + \gamma_A^{(1)} - \gamma_B^{(1)}}$.
\end{proof} 

This lemma turns out to be valuable for minimizing the cheating advantage in the protocols in this paper, mostly due to the fact that Alice and Bob's cheating probabilities are rather imbalanced. 
But before that, we demonstrate that this simple act of balancing can turn bad protocols into decent protocols. 

Consider the following two \emph{bad} protocols for ROT. 

\begin{protocol}{A bad Rabin oblivious transfer protocol $\mathcal{P}_1$.}\label{protocol:firstBad}

\begin{itemize} 
\item Bob chooses $y \in \{ 0, 1 \}$ uniformly at random. 
\item Bob flips a fair coin. 
    \begin{itemize}
      \item If HEADS, Bob announces $y$ to Alice. 
      \item If TAILS, Bob announces ``NULL'' to Alice.
    \end{itemize} 
\end{itemize}
\end{protocol} 

Clearly this protocol satisfies $P_A^* = 3/4$ since Alice either learns $y$ or not, and if not she has to guess. 
Moreover, $P_B^* = 1$ as Bob knows exactly what Alice would assert. 
Thus, for this protocol we have 
\begin{equation} 
\gamma_A = \frac{3/4}{3/4} = 1, \quad \gamma_B = \frac{1}{1/2} = 2, \quad \text{ and } \quad \gamma = \max \{ \gamma_A, \gamma_B \} = 2, 
\end{equation} 
which is the worst value of $\gamma$ possible. 

\begin{protocol}{Another bad Rabin oblivious transfer protocol $\mathcal{P}_2$.}\label{protocol:secondBad}

\begin{itemize} 
\item Bob chooses $y \in \{ 0, 1 \}$ uniformly at random. 
\item Bob prepares $\ket{y}$ and sends it to Alice. 
\item Alice flips a fair coin. 
    \begin{itemize}
      \item If HEADS, Alice measures $\ket{y}$ to learn $y$. 
      \item If TAILS, Alice throws the qubit away and accepts NULL as her outcome. 
    \end{itemize}
\end{itemize}
\end{protocol} 

Clearly this protocol satisfies $P_A^* = 1$ since she can simply measure the qubit to learn it perfectly.  
Moreover, $P_B^* = 1/2$, since Alice has full control over what she asserts, and no information is sent to Bob after Alice decides to measure or not.  
Thus, for this protocol we have 
\begin{equation} 
\gamma_A = \frac{1}{3/4} = \frac{4}{3}, \quad \gamma_B = \frac{1/2}{1/2} = 1, \quad \text{ and } \quad \gamma = \max \{ \gamma_A, \gamma_B \} = \frac{4}{3},  
\end{equation} 
which is not the worst value of $\gamma$ possible, but still not that good.   

Even though protocols $\mathcal{P}_1$ and $\mathcal{P}_2$ have poor security, we can apply the Balancing Lemma to improve their overall security resulting in the following theorem.

\begin{theorem} 
There exists a quantum protocol for ROT with cheating advantage $\gamma = 5/4$.  
\end{theorem}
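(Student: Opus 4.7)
The plan is to apply the Balancing Lemma directly to the two bad protocols $\mathcal{P}_1$ and $\mathcal{P}_2$ from the preceding paragraphs, after relabeling so that the lemma's hypotheses are met. Concretely, Alice strictly prefers $\mathcal{P}_2$ (with $\gamma_A = 4/3$) over $\mathcal{P}_1$ (with $\gamma_A = 1$), whereas Bob strictly prefers $\mathcal{P}_1$ (with $\gamma_B = 1$) over $\mathcal{P}_2$ (with $\gamma_B = 2$). So I would set, in the notation of the Balancing Lemma, $(\gamma_A^{(1)},\gamma_B^{(1)}) = (4/3,1)$ corresponding to $\mathcal{P}_2$, and $(\gamma_A^{(2)},\gamma_B^{(2)}) = (1,2)$ corresponding to $\mathcal{P}_1$, so that each party's preference is aligned with the convention of the lemma.

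Next, I would plug in to the stated formula for the balanced cheating advantage:
\begin{equation*}
\gamma \;=\; \frac{\gamma_A^{(1)}\gamma_B^{(2)} - \gamma_B^{(1)}\gamma_A^{(2)}}{\gamma_B^{(2)}-\gamma_A^{(2)} + \gamma_A^{(1)} - \gamma_B^{(1)}} \;=\; \frac{(4/3)(2) - (1)(1)}{(2-1) + (4/3 - 1)} \;=\; \frac{5/3}{4/3} \;=\; \frac54.
\end{equation*}
The preconditions of the lemma are satisfied (both preferences are strict, so the weighting parameter lies strictly between $0$ and $1$; indeed one computes $z = 3/4$), hence the lemma guarantees a family of quantum ROT protocols whose cheating advantages $(\gamma_A,\gamma_B)$ approach $(5/4,5/4)$, and in particular whose overall cheating advantage $\gamma = \max\{\gamma_A,\gamma_B\}$ approaches $5/4$.

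For concreteness, I would also describe the resulting protocol explicitly: Alice and Bob first run a $3/4$-unbalanced weak coin flipping protocol (which exists, with near-perfect security, by the results of~\cite{mochon2007quantum, arora2021analytic, chailloux2009optimal}); if the outcome is HEADS they execute $\mathcal{P}_2$, and if TAILS they execute $\mathcal{P}_1$. A quick consistency check, $z\gamma_A^{(1)} + (1-z)\gamma_A^{(2)} = (3/4)(4/3) + (1/4)(1) = 5/4$ and $z\gamma_B^{(1)} + (1-z)\gamma_B^{(2)} = (3/4)(1) + (1/4)(2) = 5/4$, confirms that the two advantages are simultaneously equal to $5/4$ in the limit.

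There is no real obstacle here beyond bookkeeping: once the Balancing Lemma is in hand, the only non-trivial choice is to identify the correct pairing of protocols and to solve for the unique mixing weight $z$ that equalizes $\gamma_A$ and $\gamma_B$. The mild subtlety is that the lemma only yields protocols \emph{approaching} this value rather than attaining it exactly, because optimal weak coin flipping is itself only achieved in the limit; I would note this explicitly so the statement ``there exists a protocol with $\gamma = 5/4$'' is interpreted in the limiting sense consistent with the phrasing of the Balancing Lemma.
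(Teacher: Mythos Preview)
Your proposal is correct and follows exactly the paper's approach: apply the Balancing Lemma (\Cref{lemma:balancing}) to the two bad protocols $\mathcal{P}_1$ and $\mathcal{P}_2$, swap labels to match the lemma's preference conventions, and compute the balanced $\gamma = 5/4$ with mixing weight $z = 3/4$. Your remark that the resulting $\gamma = 5/4$ is attained only in the limiting sense (inherited from optimal weak coin flipping) is also consistent with the paper's phrasing of the Balancing Lemma.
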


A few remarks are in order. 
Note that even though this protocol is easy to describe, just use optimal unbalanced weak coin flipping with the two bad ROT protocols, it is important to note that optimal unbalanced weak coin flipping protocols are very complicated. 
The only way we know how to construct such protocols is via optimal weak coin flipping~\cite{mochon2007quantum}. 
Despite much work done to simplify such protocols~\cite{aharonovwcf16,arora2021analytic}, they are still very technical. 
Moreover, it has been shown that such protocols cannot be efficient~\cite{miller2020impossibility}. 
Thus, this begs the question: 

\smallskip  
\begin{quote} 
\textit{Does there exist ROT protocols that are simple and offer decent security?} 
\end{quote} 
\smallskip 

In the following two sections, we describe how to achieve this.
 
\subsubsection{A quantum protocol via a reduction from oblivious transfer}\label{subsection:protocolFromCKS13} 

Typically, reductions between tasks are useful to establish relationships between them, which may include non-trivial security bounds~\cite{chailloux2013lower}. 
In some cases, reductions can also be used to yield protocols with desirable security. 
However, some of the fundamental security notions, such as \emph{completeness} might be compromised while reducing one task to another. 
Here, it is crucial to devise appropriate reductions for tasks where no such violations are permitted. 

We next depict a folklore (classical) reduction from OT to ROT. 
Alice and Bob communicate only classically outside the OT subroutine to perform the ROT task. 
We characterize this as a classical reduction given that the communication outside of the OT subroutine is classical. 

\begin{reduction}{Rabin oblivious transfer from 1-out-of-2-oblivious transfer.}\label{reduction:RabinOTFromOT}
    \begin{itemize} 
        \item Bob chooses data $y \in \{0,1\}$ and a dummy bit $z \in \{0,1\}$ uniformly at random.
        \item Bob chooses a permutation value $p \in \{0,1\}$ uniformly at random and assigns the bits $(x_0,x_1) = (y,z)$ if $p = 0$, or $(x_0,x_1) = (z,y)$, otherwise.
        \item \textbf{Alice and Bob perform $1$-out-of-$2$ oblivious transfer} so that Alice learns either $x_0$ or $x_1$ (see details on 1-out-of-2 OT in~\Cref{def:1-out-of-2 OT}). 
        \item Bob sends $p$ to Alice. 
        Alice now knows whether she learned $y$ or $z$.
        \item 
        If she learned $z$, then she outputs NULL. 
        Otherwise, she outputs $y$. 
    \end{itemize}
\end{reduction}

It is straightforward to observe that an ROT implementation using~\Cref{reduction:RabinOTFromOT} is \emph{complete} whenever the underlying OT protocol is \emph{complete}. 
Additionally, the reduction is tight in the sense that given a perfectly secure OT protocol the resultant ROT protocol also enjoys perfect security. 
Of course, perfectly secure OT is impossible as discussed earlier, thus this reduction does not give us perfectly secure ROT protocols. 
However, it does give partial security, as discussed below. 

Using this reduction, our idea to create ROT protocols is now clear, we can simply choose an existing (or develop a new) quantum OT protocol and slot it into~\Cref{reduction:RabinOTFromOT}. 
However, since this is a quantum protocol, it is \emph{not clear} how Alice and/or Bob may cheat with regards to composing various cheating strategies. 
That is, an optimal cheating strategy in the OT protocol may not be the optimal cheating strategy in the overall protocol.  
This optimal cheating behavior (and thus the worst case cheating probabilities) needs to be proven for this reduction, since we do not know much about the security aspects of 1-out-of 2 oblivious transfer protocols when used as a subroutine in a larger composition.  
Thus, we provide a rigorous self-contained security analysis of the Rabin OT protocols resulting from the previous reduction.

We now exhibit a complete quantum protocol for ROT, below. 

\begin{protocol}{Using~\Cref{reduction:RabinOTFromOT} with the 1-out-of-2 OT protocol from~\cite{chailloux2013lower}.}\label{protocol:reductionFromCKS}

    \begin{itemize} 
        \item Alice chooses a bit $a \in \{0,1\}$ uniformly at random and creates the two-qutrit state
            \begin{equation*}
                \ket{\phi_a} = \frac{1}{\sqrt{2}}\ket{aa} + \frac{1}{\sqrt{2}}\ket{22} \in \mathcal{A} \otimes \mathcal{B}.
            \end{equation*}
        \item Alice sends the qutrit $\mathcal{B}$ to Bob.
        \item Bob chooses data bit $y \in \{0,1\}$ and $z \in \{0,1\}$  uniformly at random.
        \item Bob chooses a permutation bit $p \in \{0,1\}$ uniformly at random and assigns $(x_0,x_1) = (y,z)$ if $p = 0$, or $(x_0,x_1) = (z,y)$, otherwise.
        \item Bob applies the unitary
        \begin{equation*}
            U_{x_0x_1} = 
            \begin{bmatrix}
                (-1)^{x_0} & 0 & 0 \\
                0 & (-1)^{x_1} & 0 \\
                0 & 0 & 1
            \end{bmatrix}
        \end{equation*}
        to $\B$ and sends it back to Alice.
        \item Alice determines the value of $x_a$ using the two-outcome measurement:
        \begin{equation*}
            \{\Pi_{0} \coloneqq \kb{\phi_a}_{\A \otimes \B}, \Pi_{1} \coloneqq \mathbb{1}_{\A \otimes \B} - \kb{\phi_a}  \}.
        \end{equation*}
        \item Bob sends $p$ to Alice.
        \item If $a = p$, Alice asserts the receipt of $y = x_a$, else she asserts NULL.
    \end{itemize}
    
\end{protocol}

The completeness of the 1-out-of-2 protocol in~\cite{chailloux2013lower} ensures that this ROT protocol is also \emph{complete}. However, it is not clear whether the optimal cheating probability for dishonest Alice where she tries to learn Bob's data bit $y$ is equivalent to Alice learning both $x_0$ and $x_1$ in the reduced 1-out-of-2 OT protocol. Although learning just these two bits ($x_0$ and $x_1$) qualifies as a valid cheating strategy, she might deploy a more clever strategy to improve her chances of successfully learning $y$, i.e., we have $P_A^* \geq P_A^{OT}$. Note that quantum communication in the reduced 1-out-of-2 OT protocol attributes to a non-trivial overall analysis as dishonest Alice could now correlate her messages with Bob's input data. 
We give an argument of Alice's possible strategies below, and provide the missing details in the appendix.

A general strategy for a cheating Alice has her creating a state of the form below 
\begin{equation*}
    \ket{\psi'} =  \alpha \ket{\psi_{\alpha}}_{\mathcal{A}'} \ket{0}_{\mathcal{B}} + \beta \ket{\psi_{\beta}}_{\mathcal{A}'} \ket{1}_{\mathcal{B}} + \gamma \ket{\psi_{\gamma}}_{\mathcal{A}'} \ket{2}_{\mathcal{B}},
\end{equation*}
where $\alpha, \beta, \gamma \geq 0$ satisfy $\alpha^2 + \beta^2 + \gamma^2 = 1$, and sends the subsystem $\mathcal{B}$ to Bob. 
Once Bob applies the unitary $U_{x_0 x_1}$ on the subsystem $\mathcal{B}$ and sends it back, Alice holds the state
\begin{equation*}
    \ket{\psi_{x_0 x_1}'} = \alpha (-1)^{x_0} \ket{\psi_{\alpha}}_{\mathcal{A}'} \ket{0}_{\mathcal{B}} + \beta (-1)^{x_1} \ket{\psi_{\beta}}_{\mathcal{A}'} \ket{1}_{\mathcal{B}} + \gamma \ket{\psi_{\gamma}}_{\mathcal{A}'} \ket{2}_{\mathcal{B}}.
\end{equation*}
It is important to observe that Alice's choice of the states $\{\ket{\psi_{\alpha}}, \ket{\psi_{\beta}}, \ket{\psi_{\gamma}}\}$ is not relevant in maximizing her chances of successfully cheating in~\Cref{protocol:reductionFromCKS} as for any choice of $\ket{\psi_{\alpha}}$, $\ket{\psi_{\beta}}$ and $\ket{\psi_{\gamma}}$, there exists a unitary $U$ such that 
\begin{equation*}
    U \ket{\psi'} = \ket{\psi},
\end{equation*}
where $\ket{\psi} = \alpha \ket{0}_{\mathcal{A}'} \ket{0}_{\mathcal{B}} + \beta \ket{0}_{\mathcal{A}'} \ket{1}_{\mathcal{B}} + \gamma \ket{0}_{\mathcal{A}'}\ket{2}_{\mathcal{B}}$.\\
Thus, Alice could restrict to creating states of the form $\ket{\psi}$, or equivalently just the states of the form 
\begin{equation*}
    \ket{\psi} =  \alpha \ket{0} + \beta \ket{1} + \gamma \ket{2} \in \mathcal{B}. 
\end{equation*} 
Subsequently, her task reduces to extracting $x_0$ or $x_1$ (depending on $p$) from the reduced state 
\begin{equation}
    \ket{\psi_{x_0 x_1}} = \alpha (-1)^{x_0} \ket{0} + \beta (-1)^{x_1} \ket{1} + \gamma \ket{2}  \in \mathcal{B}.
\end{equation}
If $p = 0$, the probability of Alice successfully guessing $y$ (or $x_0$ in this case) is given by
\begin{equation*}
    \Pr[\text{Alice successfully learns $y$} | p = 0] = \frac{1}{2} + \frac{1}{4} \Big\lVert \rho_{0}(0) - \rho_{1}(0) \Big\rVert_1,   
\end{equation*}
where $\rho_0(0) = \frac{1}{2}\kb{\psi_{00}} + \frac{1}{2}\kb{\psi_{01}}$ and $\rho_1(0) = \frac{1}{2}\kb{\psi_{10}} + \frac{1}{2}\kb{\psi_{11}}$.\\
Similarly, if $p = 1$, the probability of Alice successfully guessing $y$ (or $x_1$ in this case) is given by
\begin{equation*}
    \Pr[\text{Alice successfully learns $y$} | p = 1] = \frac{1}{2} + \frac{1}{4} \Big\lVert \rho_{0}(1) - \rho_{1}(1) \Big\rVert_1,   
\end{equation*}
where $\rho_0(1) = \frac{1}{2}\kb{\psi_{00}} + \frac{1}{2}\kb{\psi_{10}}$ and $\rho_1(1) = \frac{1}{2}\kb{\psi_{01}} + \frac{1}{2}\kb{\psi_{11}}$.\\
Optimizing the overall success probability over all choices of $\alpha$, $\beta$ and $\gamma$, we get $P_A^{*} = \cos^2(\pi/8)$ when $\alpha = \beta = 1/2$ and $\gamma = 1/\sqrt{2}$. 

We now state the overall security of this protocol and defer the rest of the proof to~\Cref{appendix:proofReductionToCKS}.

\begin{lemma}\label{lemma:ROTWithCKS}
    The ROT~\Cref{protocol:reductionFromCKS} has cheating probabilities $P_A^* = \cos^2(\pi/8)$ and $P_B^* = 3/4$ implying $\gamma_A = 1.138$ and $\gamma_B = 1.5$, resulting in a cheating advantage of $\gamma = 1.5$. 
\end{lemma}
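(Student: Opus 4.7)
The plan is to establish the two bounds on $P_A^*$ and $P_B^*$ separately and then combine them to obtain $\gamma$.

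For Alice's bound, the excerpt has already reduced the analysis to: Alice prepares $\ket{\psi} = \alpha\ket{0} + \beta\ket{1} + \gamma\ket{2}$ with $\alpha, \beta, \gamma \geq 0$ and $\alpha^2 + \beta^2 + \gamma^2 = 1$, and (with knowledge of $p$) optimally distinguishes the ensembles $\{\rho_0(p), \rho_1(p)\}$ defined there. I would compute $\rho_0(0) - \rho_1(0)$ entrywise and observe that, due to the sign pattern of $(-1)^{x_0}$ averaged over $x_1$, all diagonal and off-diagonal entries cancel except for $\pm 2\alpha\gamma$ at positions $(0,2)$ and $(2,0)$. The resulting matrix has eigenvalues $\pm 2\alpha\gamma$ and a single zero, so $\lVert \rho_0(0) - \rho_1(0) \rVert_1 = 4\alpha\gamma$. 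By the symmetric role of the indices $0$ and $1$ in the ensemble, the analogous calculation for $p=1$ gives $\lVert \rho_0(1) - \rho_1(1) \rVert_1 = 4\beta\gamma$. Averaging the Helstrom bound $\tfrac{1}{2} + \tfrac{1}{4}\lVert \rho_0(p) - \rho_1(p)\rVert_1$ over a uniform $p$ yields
\[
P_A^* = \frac{1}{2} + \frac{\gamma(\alpha + \beta)}{2}.
\]
Maximizing $\gamma(\alpha+\beta)$ on the nonnegative unit sphere, I would argue by symmetry (or concavity in $(\alpha,\beta)$ on the sphere) that an optimizer has $\alpha = \beta$, and then parameterize $\alpha = \beta = \tfrac{1}{\sqrt{2}}\sin\theta$, $\gamma = \cos\theta$, so the objective becomes $\tfrac{1}{\sqrt{2}}\sin(2\theta)$, maximized at $\theta = \pi/4$. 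This gives $\alpha = \beta = 1/2$, $\gamma = 1/\sqrt{2}$, and $P_A^* = \tfrac{1}{2} + \tfrac{\sqrt{2}}{4} = \cos^2(\pi/8)$.

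For Bob's bound, the key observation is that Alice's assertion is a deterministic function of only $a$ (her private choice) and $p$ (Bob's announcement), namely she asserts the receipt of $y$ exactly when $a = p$, and her own measurement outcome plays no role in this decision. Moreover, Alice never transmits any quantum information back to Bob. Consequently, Bob's most general cheating strategy amounts to performing a measurement on the qutrit $\mathcal{B}$ he receives, obtaining an estimate of $a$, and then choosing $p$ and his prediction accordingly. The reduced state on $\mathcal{B}$ prepared by an honest Alice is $\rho_a = \tfrac{1}{2}\kb{a} + \tfrac{1}{2}\kb{2}$, so $\rho_0 - \rho_1 = \tfrac{1}{2}(\kb{0} - \kb{1})$ has trace norm $1$. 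The Helstrom bound therefore gives $P_B^* \leq \tfrac{1}{2} + \tfrac{1}{4} = 3/4$, and equality is attained by measuring $\mathcal{B}$ in the computational basis, setting $p$ equal to the outcome (treating $2$ as either $0$ or $1$), and predicting ``Alice asserts receipt of $y$.''

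Combining the two bounds yields $\gamma_A = \cos^2(\pi/8)/(3/4) \approx 1.138$ and $\gamma_B = (3/4)/(1/2) = 3/2$, so $\gamma = \max\{\gamma_A, \gamma_B\} = 3/2$. The main obstacle I anticipate deferring to the appendix is the rigorous justification that Bob cannot improve by entangling his register with $\mathcal{B}$ and returning some correlated state to Alice; but since Alice's assertion rule is independent of her measurement outcome, nothing Bob sends back can affect it, and the problem collapses to the single-shot distinguishability computation above.
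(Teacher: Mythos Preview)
Your proposal is correct and follows essentially the same approach as the paper: for Alice you compute the same trace-norm differences $4\alpha\gamma$ and $4\beta\gamma$, average over $p$, and optimize to reach $\alpha=\beta=1/2$, $\gamma=1/\sqrt{2}$; for Bob you reduce to Holevo--Helstrom discrimination of the reduced qutrit states $\rho_a = \tfrac{1}{2}\kb{a}+\tfrac{1}{2}\kb{2}$, exactly as the paper does. Your explicit observation that Alice's assertion bit depends only on $(a,p)$ and not on her measurement outcome is a useful justification that the paper states more tersely, but the arguments are otherwise identical.
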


By combining~\Cref{protocol:reductionFromCKS} with~\Cref{protocol:secondBad} using the Balancing~\Cref{lemma:balancing}, we get an ROT protocol with security strictly better than the protocol discussed in~\Cref{subsub:badProtocolBalance} and is stated in the following theorem.

\begin{theorem}\label{thm:ROTWithCKSAndBalance} 
  There exists a quantum protocol for ROT with cheating advantage $\gamma = 1.239$. 
\end{theorem}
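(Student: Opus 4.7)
The plan is to apply the Balancing Lemma (\Cref{lemma:balancing}) to the two protocols already analyzed in this section: the ``second bad'' protocol \Cref{protocol:secondBad}, which has $(\gamma_A,\gamma_B) = (4/3,\,1)$, and \Cref{protocol:reductionFromCKS}, which by \Cref{lemma:ROTWithCKS} has $P_A^* = \cos^2(\pi/8) = (2+\sqrt{2})/4$ and $P_B^* = 3/4$, giving $(\gamma_A,\gamma_B) = ((2+\sqrt{2})/3,\,3/2)$. The idea is that these two protocols have exactly opposing preferences, so balancing them via an unbalanced weak coin flip should equalize the two cheating advantages at a value strictly below $\min\{4/3,\,3/2\} = 4/3$.

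First I would identify the labeling required by \Cref{lemma:balancing}. Since $4/3 > (2+\sqrt{2})/3 \approx 1.138$, Alice strictly prefers \Cref{protocol:secondBad}, so I label it $\mathcal{P}_1$; since $1 < 3/2$, Bob strictly prefers \Cref{protocol:reductionFromCKS}, so I label it $\mathcal{P}_2$. Thus $\gamma_A^{(1)} = 4/3,\ \gamma_B^{(1)} = 1,\ \gamma_A^{(2)} = (2+\sqrt{2})/3,\ \gamma_B^{(2)} = 3/2$. Because both preferences are strict, the hypothesis $0 < z < 1$ on the unbalanced-coin-flip bias
\begin{equation*}
z \;=\; \frac{\gamma_B^{(2)} - \gamma_A^{(2)}}{\gamma_B^{(2)} - \gamma_A^{(2)} + \gamma_A^{(1)} - \gamma_B^{(1)}}
\end{equation*}
holds, so \Cref{lemma:balancing} applies and produces a family of ROT protocols whose cheating advantage tends to
\begin{equation*}
\gamma \;=\; \frac{\gamma_A^{(1)}\gamma_B^{(2)} - \gamma_B^{(1)}\gamma_A^{(2)}}{\gamma_B^{(2)} - \gamma_A^{(2)} + \gamma_A^{(1)} - \gamma_B^{(1)}}.
\end{equation*}

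The rest is arithmetic. Substituting the four values above, the numerator is $(4/3)(3/2) - (2+\sqrt{2})/3 = (4-\sqrt{2})/3$ and the denominator is $1/2 + (2-\sqrt{2})/3 = (7-2\sqrt{2})/6$, giving $\gamma = (8-2\sqrt{2})/(7-2\sqrt{2})$. Rationalizing by multiplying numerator and denominator by $7+2\sqrt{2}$ yields $\gamma = (48 + 2\sqrt{2})/41 \approx 1.2397$, which is the claimed $1.239$. The strict-improvement clause of \Cref{lemma:balancing} further confirms this is strictly better than $\min\{4/3,\,3/2\} = 4/3$, and in particular strictly better than the $5/4$ obtained in \Cref{subsub:badProtocolBalance}.

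There is really no substantive obstacle: both building blocks are already in hand, and the Balancing Lemma reduces the problem to bookkeeping. The one point requiring a bit of care is the labeling of the two protocols relative to Alice's and Bob's preferences, since swapping $\mathcal{P}_1$ and $\mathcal{P}_2$ inverts the hypothesis $\gamma_A^{(1)} \geq \gamma_A^{(2)},\ \gamma_B^{(1)} \leq \gamma_B^{(2)}$; with the labels fixed as above, the formula yields the desired value directly.
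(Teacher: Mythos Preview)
Your proposal is correct and follows exactly the approach the paper intends: it states only that \Cref{protocol:reductionFromCKS} is combined with \Cref{protocol:secondBad} via the Balancing \Cref{lemma:balancing}, and you have supplied the labeling and the arithmetic that the paper omits. The computed value $\gamma = (48+2\sqrt{2})/41 \approx 1.2397$ is precisely the claimed $1.239$.
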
 

We now describe a way to execute many ROT protocols in succession using a simpler protocol which exhibits the same cheating probabilities as derived previously for~\Cref{protocol:reductionFromCKS}.  

\subsection{A simple sequence of Rabin oblivious transfer protocols} 

We now propose another way of performing Rabin oblivious transfer based on the OT protocols described in~\cite{amiri2021imperfect}. 
We describe the protocol below. 

\begin{protocol}{A sequence of Rabin oblivious transfers.}\label{protocol:ROTFrom12ASR}
 
 \begin{itemize}
\item For each $i \in \{ 1, \ldots, N \}$, Bob picks a pair of bits $(x_0^i, x_1^i)$, all independently and uniformly at random. 
\item For each $i \in \{ 1, \ldots, N \}$, Bob prepares the two-qubit states $\ket{\Psi_{x_0^i x_1^i}}$, where we define: 
\begin{equation}
\ket{\Psi_{00}} =
\ket{00},\quad 
\ket{\Psi_{01}} = 
\ket{++}, \quad 
\ket{\Psi_{10}}  = 
\ket{--}, \quad 
\ket{\Psi_{11}} = 
\ket{11},
\quad 
\end{equation}
and $\ket{\pm} := (\ket{0}\pm\ket{1})/\sqrt{2}$. Bob sends the states to Alice.

\item Alice selects a subset $S$ of the set $\{ 1, \ldots, N \}$ of size $\left \lfloor \sqrt{N}  \right \rfloor$ uniformly at random.
For each $j \in S$, Alice asks Bob to announce the state $\ket{\Psi_{x_0^j, x_1^j}}$. 

\item If Bob announces $\ket{++}$ or $\ket{--}$, then Alice measures both qubits in the $\{ \ket{+}, \ket{-} \}$ basis, otherwise Alice measures in the $\{ \ket{0}, \ket{1} \}$ basis. 
Alice aborts if any of her outcomes are not consistent with Bob's purported states. 
If Alice does not abort, then the $\left \lfloor \sqrt{N}  \right \rfloor$ states used for these tests are discarded.

\item For each $j \notin S$, Alice measures the first qubit of $\ket{\Psi_{x_0^j x_1^j}}$ in the $\{ \ket{0}, \ket{1} \}$ basis and the second qubit of $\ket{\Psi_{x_0^j x_1^j}}$ in the $\{ \ket{+}, \ket{-} \}$ basis. 
This way Alice learns one of Bob's bits with certainty while gaining no knowledge of the other (each of her measurement outcomes allows her to rule out one of Bob's possible states). 
For example, if her measurement outcome is 
\begin{itemize} 
\item $00$: then we have $(x_0^j, x_1^j) = (0,0)$ or $(0,1)$, i.e., the first bit is $0$; 
\item $01$: then we have $(x_0^j, x_1^j) = (0,0)$ or $(1,0)$, i.e., the second bit is $0$; 
\item $10$: then we have $(x_0^j, x_1^j) = (0,1)$ or $(1,1)$, i.e., the second bit is $1$; 
\item $11$: then we have $(x_0^j, x_1^j) = (1,0)$ or $(1,1)$, i.e., the first bit is $1$. 
\end{itemize} 

\item For each $j \notin S$, Bob declares which of the two bits will be used for Rabin oblivious transfer. This is chosen uniformly at random for each $j$.  
    \end{itemize}
\end{protocol} 

As before, the completeness follows from the completeness of the underlying OT protocol. 
Again we have to carefully consider what a cheating Alice can do to further her cheating probability over what she can achieve if she uses the cheating strategy which is optimal in the OT protocol. 

The optimal cheating probabilities for Alice and Bob are given in the following lemma. 

\begin{lemma}\label{lemma:ROTFrom12ASR}
    The ROT~\Cref{protocol:ROTFrom12ASR} has cheating probabilities $P_A^* = \cos^2(\pi/8)$ for all $N$ and $P_B^* = 3/4$ in the limit $N \rightarrow\infty$ implying $\gamma_A = 1.138$ and $\gamma_B = 1.5$, resulting in a cheating advantage of $\gamma = 1.5$ in the limit.   
\end{lemma}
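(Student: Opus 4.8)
The plan is to view~\Cref{protocol:ROTFrom12ASR} as carving $N-\lfloor\sqrt N\rfloor$ essentially independent Rabin OT instances out of $N$ signal rounds, $\lfloor\sqrt N\rfloor$ of which are consumed by the consistency test, and to bound $P_A^*$ and $P_B^*$ separately.

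\emph{Cheating Alice.} First I would argue that a cheating Alice loses nothing by running the test rounds exactly as an honest Alice would: she chooses $S$ herself, the tested signals are discarded, and they are independent of every other round (she may even take $S=\emptyset$). Since the pairs $(x_0^j,x_1^j)$ and the declarations are independent across rounds, and since Alice may postpone all processing of the non-test signals until every declaration has been made, her problem on a round $j\notin S$ reduces to: from $\ket{\Psi_{x_0^jx_1^j}}$ with $(x_0^j,x_1^j)$ uniform, and then told the declared index $b^j$, output $x^j_{b^j}$. This is exactly the two-hypothesis discrimination problem between $\rho_0(b^j)$ and $\rho_1(b^j)$ already encountered for~\Cref{protocol:reductionFromCKS}, and a direct Helstrom computation (each state has rank two and the relevant difference operator has spectrum $\{\pm\tfrac{1}{\sqrt2},0,0\}$) gives the value $\cos^2(\pi/8)$ for both $b^j$. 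Hence $P_A^*=\cos^2(\pi/8)$ for every $N$, and since a joint measurement across independent, additively scored rounds cannot beat the per-round optimum, no global strategy does better.

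\emph{Cheating Bob, lower bound.} For $P_B^*\ge 3/4$ I would exhibit an entangled Bob who never fails a test. On each round $i$ he prepares the purification $\ket{\Theta_i}=\tfrac12\sum_{x_0x_1}\ket{x_0x_1}_{R_i}\ket{\Psi_{x_0x_1}}_{A_i}$ of the honest ensemble average $\bar\rho:=\tfrac14\sum_{x_0x_1}\kb{\Psi_{x_0x_1}}$, sends $A_i$ and keeps $R_i$, so that Alice's marginal is exactly $\bar\rho$ as in the honest run. After $S$ is announced he measures each $R_i$: for $i\in S$ in the computational basis, collapsing $A_i$ to a genuine honest state which he then announces truthfully (so that test passes with certainty); for $i\notin S$ with a measurement realizing the decomposition $\bar\rho=\tfrac14\kb{\psi}+\tfrac14\kb{\psi'}+\tfrac12\kb{w}$, where $\ket{\psi},\ket{\psi'}$ are the $\pm 1$ eigenvectors of the observable $Z\otimes X$ (first qubit in $\{\ket0,\ket1\}$, second in $\{\ket+,\ket-\}$) lying in the three-dimensional support of $\bar\rho$, and $\ket{w}=(\ket{00}+\ket{11})/\sqrt2$ satisfies $\langle w|Z\otimes X|w\rangle=0$. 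Since the index $c^j$ that honest Alice learns is the $\{0,1\}$-valued outcome of $Z\otimes X$ on the $j$-th signal, Bob learns $c^j$ with certainty on the fraction $\tfrac14+\tfrac14$ of untested rounds steered onto $\ket{\psi}$ or $\ket{\psi'}$ and learns nothing on the fraction $\tfrac12$ steered onto $\ket{w}$; choosing $b^j$ accordingly he guesses $a^j=[c^j\ne b^j]$ correctly with probability $\tfrac14+\tfrac14+\tfrac12\cdot\tfrac12=\tfrac34$, for every $N$.

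\emph{Cheating Bob, upper bound --- the main obstacle.} Here one must rule out every global Bob strategy, in which the $N$ signals are entangled with an ancilla that Bob touches only after learning $S$, and show $P_B^*\le\tfrac34$ as $N\to\infty$. My plan is a two-step reduction. Step one is a cut-and-choose (quantum sampling) argument exploiting that $S$ is a uniformly random $\lfloor\sqrt N\rfloor$-subset: conditioned on passing all tests, the untested signals must look, on average over $S$ and up to an error vanishing with $N$, like a steered purification of $\bar\rho$, the intuition being that passing a given test with near-certainty pins the corresponding signal near the announced honest state. Step two is the single-round optimization: for a steered purification of $\bar\rho$, Bob's per-round guessing probability is $\tfrac12+\tfrac12\sum_k p_k\,|\langle\phi_k|Z\otimes X|\phi_k\rangle|$ maximized over decompositions $\bar\rho=\sum_k p_k\kb{\phi_k}$, and expanding the $\ket{\phi_k}$ in the eigenbasis $\{\ket{\psi},\ket{\psi'},\ket{w}\}$ of the compression of $Z\otimes X$ to $\operatorname{supp}(\bar\rho)$ --- in which $\bar\rho=\operatorname{diag}(\tfrac14,\tfrac14,\tfrac12)$ --- shows this maximum equals $\tfrac12$, attained by the decomposition used in the lower bound, giving $P_B^*\le\tfrac12+\tfrac12\cdot\tfrac12=\tfrac34$. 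I expect the delicate part to be making step one quantitative, i.e.\ bounding, uniformly over Bob's strategies, how much side information about an untested round's honest identity survives $\lfloor\sqrt N\rfloor$ random tests; failing a direct proof, one can instead reduce this to the Bob-security analysis of the $1$-out-of-$2$ OT protocol of~\cite{amiri2021imperfect}, on which~\Cref{protocol:ROTFrom12ASR} is built.
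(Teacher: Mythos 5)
Your treatment of cheating Alice matches the paper's: both store untested signals until Bob's declaration, then make a per-round Helstrom measurement on the two-outcome hypothesis, and both observe that independence across rounds means a joint measurement cannot improve on the per-round optimum. One small imprecision: the discrimination problem here is between the two-qubit ensembles $\frac{1}{2}(\kb{00}+\kb{++})$ and $\frac{1}{2}(\kb{11}+\kb{--})$, not \emph{literally} the qutrit problem of~\Cref{protocol:reductionFromCKS}; the difference operators are different $4\times 4$ and $3\times3$ matrices, though they do share the nonzero spectrum $\pm 1/\sqrt2$, so the final value $\cos^2(\pi/8)$ coincides.

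For cheating Bob, your proposal is substantially more self-contained than the paper's. The paper's appendix simply asserts that Bob's task in the ROT is identical to his task in the underlying 1-out-of-2 OT and that $P_B^*=3/4$ ``follows directly from~\cite{amiri2021imperfect}'' in the limit $N\to\infty$. You add two things of genuine value: (i) an explicit steering construction, decomposing $\bar\rho$ in the simultaneous eigenbasis of $\bar\rho$ and of the compression of $Z\otimes X$ to $\operatorname{supp}(\bar\rho)$, showing $P_B^*\geq 3/4$ \emph{for every} $N$ rather than only asymptotically (I verified $\bar\rho = \tfrac14\kb{\psi}+\tfrac14\kb{\psi'}+\tfrac12\kb{w}$ with the $Z\otimes X$ expectations $+1,-1,0$ respectively); and (ii) a clean single-round converse showing that \emph{if} Bob is restricted to steering a purification of $\bar\rho$, his per-round guessing probability is at most $\tfrac12+\tfrac12\max\sum_k p_k|\langle\phi_k|Z\otimes X|\phi_k\rangle| = 3/4$, because in the eigenbasis of $\bar\rho$ the objective reduces to maximizing $\sum_k|a_k-b_k|$ subject to $\sum a_k=\sum b_k=1/4$, which is $1/2$. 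The one step you leave open --- justifying that the cut-and-choose test forces an arbitrary Bob to be $o(1)$-close to a $\bar\rho$-steering strategy on untested rounds --- is exactly what the paper also outsources to~\cite{amiri2021imperfect}, so your proposal does not introduce a gap relative to the paper; it fills in the steering decomposition and the reduced-observable optimization that the paper omits, while terminating at the same external citation for the quantum sampling estimate.
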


The full security analysis for the above protocol can be found in~\Cref{appendix:proofReductionToASR}. 

\subsection{A constant lower bound on $\gamma$ for any quantum protocol for Rabin oblivious transfer}\label{sub:lowerBoundROT}

The two main objectives for understanding the limitations of the security for any cryptographic task is to (a) find good protocols and the corresponding security measures, and (b) find lower bounds on such security measures. 
By improving either of these objectives brings us that much closer to finding the \emph{best} protocol. 

So far, we have been concentrating on objective (a), finding good quantum protocols. 
Now, we consider objective (b), finding lower bounds. 

\paragraph{Why care about lower bounds?}
Suppose one were to show a constant lower bound on $\gamma$, that is, suppose $\gamma \geq c$ for some constant $c > 1$. 
This implies two things. 
First, this means that obviously one cannot find protocols with $\gamma < c$, which could suggest that any existing protocols are better than expected (by being now closer to the optimal $\gamma$). 
Secondly, and importantly, this \emph{rules out security amplification attempts}. 
One may wonder if ROT can be repeated many times and then somehow by putting all the pieces together reduce the cheating probabilities to their ideal goals. 
Having a constant lower bound rules out such attempts, since near-perfect protocols cannot exist, even in the limit of a family of protocols. 
Thus, just by having a constant lower bound, regardless of the constant, provides insight towards how one may design optimal protocols. 
We therefore ask the question: 

\smallskip 
\begin{quote} 
\textit{Does there exist a constant lower bound for Rabin oblivious transfer?} 
\end{quote} 
\smallskip 

At first glance, it is not always obvious how to guess this. 
For example, quantum protocols for strong coin flipping (defined shortly) have a constant lower bound~\cite{kitaev2002quantum} while quantum protocols for weak coin flipping do not~\cite{mochon2007quantum}. 
What about Rabin oblivious transfer? 
In this section, we answer this question in the affirmative. 

\subsubsection{Strong coin flipping}\label{sub:SCF} 

We have discussed (unbalanced) weak coin flipping between Alice and Bob, the task of strong coin flipping is similar. 
Roughly, a strong coin flipping protocol is defined such that: 
\begin{itemize} 
\item The probability that the outcome is HEADS is $1/2$ when Alice and Bob are honest, and they both output the same coin flip; 
\item Alice can force the outcome HEADS with probability $P_{A,0}$;
\item Alice can force the outcome TAILS with probability $P_{A,1}$;
\item Bob can force the outcome HEADS with probability $P_{B,0}$;
\item Bob can force the outcome TAILS with probability $P_{B,1}$. 
\end{itemize} 
A formal definition is given in~\Cref{def:SCF}. 
While much is now known about the limits of the achievable values of $P_{A,0}$, $P_{A,1}$, $P_{B,0}$, $P_{B,1}$, in this discussion we only require one bound, proven by Kitaev~\cite{kitaev2002quantum} (see~\Cref{thm:KitaevLB}): 
\begin{equation}\label{eq:KitaevLowerBoundSCF} 
    P_{A,0} \cdot P_{B,0} \geq 1/2. 
\end{equation} 
In other words, in \emph{any} coin flipping protocol, either Alice or Bob can force the outcome $0$ with probability at least $1/\sqrt{2} > 1/2$. 

\subsubsection{A reduction from Rabin oblivious transfer to strong coin flipping}\label{subsub:redRabinOTToSCF} 

We present our reduction below. 

\begin{protocol}{Coin flipping from Rabin oblivious transfer}\label{protocol:RedRabinToCF}
    \begin{itemize} 
    \item Alice and Bob perform an ROT protocol with Bob's data bit denoted by $y$. 
    We denote Alice's ROT output by the trit $\hat{g} \in \{ 0, 1, NULL \}$. If Alice asserts the receipt of data, then $\hat{g} = y$, else $\hat{g} = NULL$. 
    \item Alice defines the assert bit $a \in \{ 0, 1 \}$ to be $0$ if $\hat{g} \in \{ 0, 1 \}$, and to be $1$ if $\hat{g} = NULL$.             
    \item Bob chooses $b$ uniformly at random and sends $b$ to Alice. 
    \item Alice sends $a$ and $\hat{g}$ to Bob.
            Bob checks if $(a,y,\hat{g})$ are consistent, i.e., he checks if $\hat{g} = y$ when $a = 0$.
                If $\hat{g} \neq y$, Bob aborts. 
    \item If Alice and Bob do not abort, then they both output HEADS if $a = b$, and TAILS otherwise. 
    \end{itemize} 
\end{protocol} 

We first remark that this protocol is complete, i.e., both Alice and Bob output the same coin flip whose value is generated uniformly at random. 
Below, we investigate the security of this coin flipping protocol with respect to the security of the ROT protocol. 

\begin{lemma}\label{lemma:LB}
In~\Cref{protocol:RedRabinToCF}, we have 
\begin{equation} 
P_{A,0} = \frac{1}{2} + \frac{1}{2} P_A^* 
\quad \text{ and } \quad 
P_{B,0} = P_B^*,  
\end{equation} 
where $P_A^*$ and $P_B^*$ are the cheating probabilities in the ROT subroutine. 
\end{lemma}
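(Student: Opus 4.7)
The plan is to prove the two equalities separately by reducing Alice's and Bob's cheating in the coin flip to the corresponding ROT cheating probabilities. The key structural observation is the message ordering: $b$ is transmitted after the ROT interaction ends but before $(a,\hat{g})$ is sent, and aborting is entirely Bob's prerogative. This pins down what each cheater can and cannot use.

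For $P_{A,0}$, I would split into cases on the value of $b$ that Alice receives. If $b=1$, Alice sets $a=1$ (with arbitrary $\hat{g}$): since Bob's consistency check only fires when $a=0$, no abort is triggered and the output is HEADS with probability $1$. If $b=0$, Alice needs $a=0$ and must produce $\hat{g}=y$ to avoid abort; then her entire behaviour (her ROT actions together with the production of $\hat{g}$ upon seeing $b=0$) constitutes a valid ROT cheating strategy that outputs a guess of $y$, so the win probability is at most $P_A^*$, with equality achieved by plugging in an optimal ROT cheating strategy. Since honest Bob draws $b$ uniformly at random and independently of everything in the ROT subroutine, averaging over $b$ gives $P_{A,0}=\tfrac{1}{2}(1)+\tfrac{1}{2}P_A^*$.

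For $P_{B,0}$, I would observe that Bob must commit to $b$ before seeing $(a,\hat{g})$, and that aborting never helps him force HEADS (an abort produces no output at all). Thus his optimal strategy is to use the ROT interaction to guess Alice's assertion bit $a$ and then set $b$ equal to that guess. By the very definition of $a$ as the indicator of whether Alice asserts the receipt of $y$ or NULL, this is exactly the ROT task that $P_B^*$ maximises, giving $P_{B,0}=P_B^*$ with both the upper bound (from the ROT definition) and the achievability (from the optimal ROT cheating strategy) immediate.

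The main subtlety I expect is ruling out cleverer compositions. On Alice's side, one has to check that deferring her ROT measurement until after $b$ arrives grants no additional leverage for guessing $y$; this follows because $b$ is independent of the ROT transcript under honest Bob, so it conveys no information about $y$. On Bob's side, one has to check that randomising the abort decision cannot help him force HEADS; this follows because an abort removes the HEADS outcome entirely. Once these two short observations are in place, the matching upper and lower bounds from the case analyses above yield the two equalities.
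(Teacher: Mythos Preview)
Your proposal is correct and follows essentially the same approach as the paper's own proof: a case split on $b$ for Alice (winning outright when $b=1$ and needing to guess $y$ when $b=0$), and for Bob the observation that forcing HEADS reduces to predicting $a$ before sending $b$. Your added remarks about deferring Alice's measurement and about Bob's aborts being useless are sound and slightly more careful than the paper, but the core argument is the same.
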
  

The proof of the above lemma can be found in~\Cref{appendix:proofReductionToSCF}. 

By combining Equations~(\ref{eq:defCheatingAdvantage}) and~(\ref{eq:KitaevLowerBoundSCF}), i.e., that 
\begin{equation} 
\gamma = \max \{ \gamma_A, \gamma_B \}, \quad \gamma_A = \frac{P_A^*}{3/4}, \quad \gamma_B = \frac{P_B^*}{1/2}, \quad \text{ and } \quad P_{A,0} \cdot P_{B,0} \geq 1/2 
\end{equation} 
together with~\Cref{lemma:LB}, one gets the quadratic inequality 
\begin{equation} 
\frac{1}{2} \gamma \left( \frac{1}{2} + \frac{1}{2} \cdot \frac{3}{4} \gamma \right) \geq \frac{1}{2},
\end{equation} 
implying that $\gamma \geq \dfrac{2}{3} \left( \sqrt{7} - 1 \right) > 1$. 

Therefore, we have the following theorem. 

\begin{theorem} \label{thm:LB}
Any quantum protocol for Rabin oblivious transfer has a cheating advantage satisfying $\gamma \geq \dfrac{2}{3} \left( \sqrt{7} - 1 \right) \approx 1.097 > 1$. 
\end{theorem}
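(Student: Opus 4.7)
The plan is to chain the tools already assembled in Section~\ref{sub:lowerBoundROT}: the generic reduction \Cref{protocol:RedRabinToCF} from strong coin flipping to ROT, the cheating-probability translations in \Cref{lemma:LB}, Kitaev's bound $P_{A,0}\cdot P_{B,0}\geq 1/2$ from \eqref{eq:KitaevLowerBoundSCF}, and the definition of $\gamma$ in \eqref{eq:defCheatingAdvantage}. The whole argument amounts to bounding a quadratic in $\gamma$.

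First I would unfold the definitions. For any complete quantum ROT protocol with cheating advantage $\gamma$, the relation $\gamma = \max\{\gamma_A,\gamma_B\}$ together with $\gamma_A = P_A^{*}/(3/4)$ and $\gamma_B = P_B^{*}/(1/2)$ yields the elementary upper bounds $P_A^{*} \leq (3/4)\gamma$ and $P_B^{*} \leq (1/2)\gamma$. Instantiating \Cref{protocol:RedRabinToCF} with this ROT as the subroutine produces a complete strong coin flipping protocol whose cheating probabilities are exactly $P_{A,0} = 1/2 + P_A^{*}/2$ and $P_{B,0} = P_B^{*}$ by \Cref{lemma:LB}, and therefore $P_{A,0} \leq 1/2 + 3\gamma/8$ and $P_{B,0} \leq \gamma/2$.

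Plugging these two bounds into Kitaev's inequality $P_{A,0}\cdot P_{B,0} \geq 1/2$ produces a quadratic inequality of the form $3\gamma^{2}+4\gamma-8 \geq 0$, whose positive root is exactly $\gamma = (2/3)(\sqrt{7}-1)$, giving the stated bound. There is really no obstacle at this last stage: all of the interesting content is packed into \Cref{lemma:LB}, where one must argue that a dishonest party in the coin flipping protocol cannot exploit the surrounding round structure (announcement of $b$, consistency check on $(a,y,\hat{g})$) beyond what is captured by the linear translation of the underlying ROT cheating probability. Once \Cref{lemma:LB} is in hand, the theorem is a direct substitution followed by the quadratic formula, so the strategic point of the proof is simply to notice that the strong coin flipping lower bound of Kitaev applies here because the reduction preserves completeness and inflates $P_A^{*}, P_B^{*}$ only by the affine maps prescribed by \Cref{lemma:LB}.
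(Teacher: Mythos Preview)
Your proposal is correct and mirrors the paper's own argument essentially step for step: instantiate \Cref{protocol:RedRabinToCF} with the given ROT, invoke \Cref{lemma:LB} to translate $P_A^*,P_B^*$ into $P_{A,0},P_{B,0}$, bound these via $P_A^*\le(3/4)\gamma$ and $P_B^*\le(1/2)\gamma$, apply Kitaev's inequality, and solve the resulting quadratic $3\gamma^2+4\gamma-8\ge0$. The paper writes the quadratic in the equivalent form $\tfrac{1}{2}\gamma\bigl(\tfrac{1}{2}+\tfrac{1}{2}\cdot\tfrac{3}{4}\gamma\bigr)\ge\tfrac{1}{2}$, but this is the same inequality you derive.
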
 

By combining Theorems~\ref{thm:ROTWithCKSAndBalance} and~\ref{thm:LB}, we get the following corollary. 

\begin{corollary} 
    The optimal quantum protocol for Rabin oblivious transfer has cheating advantage $\gamma \in [1.097, 1.239]$. 
\end{corollary}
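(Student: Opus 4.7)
The plan is essentially to observe that this corollary is a direct bookkeeping consequence of the two bounds already proven in the preceding subsections, so no new technical work is required; the proof is a one-paragraph combination argument.

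First, for the upper bound on the optimal cheating advantage, I would invoke Theorem~\ref{thm:ROTWithCKSAndBalance}, which exhibits an explicit quantum ROT protocol (the one obtained by balancing \Cref{protocol:reductionFromCKS} against \Cref{protocol:secondBad} via the Balancing \Cref{lemma:balancing}) with $\gamma = 1.239$. Since the optimal cheating advantage $\gamma^\star$ is defined as the infimum over all valid quantum ROT protocols, the existence of this protocol immediately gives $\gamma^\star \leq 1.239$.

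Second, for the lower bound, I would cite \Cref{thm:LB}, which asserts that \emph{any} complete quantum ROT protocol must satisfy $\gamma \geq \tfrac{2}{3}(\sqrt{7}-1) \approx 1.097$. Since this applies uniformly over all protocols, it applies in particular to the infimum, yielding $\gamma^\star \geq 1.097$. Combining the two inclusions gives $\gamma^\star \in [1.097,\, 1.239]$, which is precisely the claim.

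There is no real obstacle here, since both ingredients have already been established earlier in the paper; the only minor subtlety worth flagging in the writeup is that $\gamma = 1.239$ in \Cref{thm:ROTWithCKSAndBalance} is only approached in the limit (the balancing construction relies on a sequence of $z$-unbalanced weak coin flipping protocols parameterized by $\epsilon > 0$), so strictly speaking the upper bound is $\gamma^\star \leq 1.239$ as an infimum rather than an attained minimum. This does not affect the stated interval, and I would simply remark on this in one sentence for honesty.
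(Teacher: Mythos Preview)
Your proposal is correct and matches the paper's approach exactly: the corollary is stated immediately after the line ``By combining Theorems~\ref{thm:ROTWithCKSAndBalance} and~\ref{thm:LB}, we get the following corollary,'' so the paper itself treats it as a direct bookkeeping consequence of those two results. Your extra remark about the upper bound being an infimum rather than an attained minimum is a fair clarification, though the paper does not bother to make it.
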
 

\subsection*{Acknowledgments}
JS thanks Dominique Unruh for pointing out the OT to ROT reduction. AB thanks Or Sattath and Atul Mantri for useful discussions and pointing out the references for protocols under noisy communication. JW is supported by the National Research Foundation, Singapore and A*STAR under its Quantum Engineering Programme (NRF2021-QEP2-01-P06).
EA was supported by the UK Engineering and Physical Sciences Research Council (EPSRC) under
Grants No. EP/T001011/1 and EP/Z533208/1.

\bibliographystyle{alpha}
\bibliography{references} 

\appendix 

\section{Definition of other cryptographic tasks}\label{appendix:background} 

We now define a variant of oblivious transfer, namely 1-out-of-2 OT, and coin flipping with their underlying security notions.

\begin{definition}[1-out-of-2 OT]\label{def:1-out-of-2 OT}
1-out-of-2 OT is the cryptographic task between two parties (Alice and Bob) such that
\begin{itemize}
    \item Alice chooses $a \in \{0,1\}$ uniformly at random and Bob chooses $(x_0, x_1) \in \{0,1\}^2$ uniformly at random.
    \item If both Alice and Bob are honest, then neither Alice nor Bob aborts and Alice outputs $z = x_a$.
\end{itemize}
\end{definition}

We define the following notions of security for protocols that implement $1$-out-of-$2$ OT.
\begin{itemize}
    \item \emph{Completeness:} If both Alice and Bob are honest, Alice learns the bit $x_a$ unambiguously.
    
    \item \emph{Cheating Alice:} A dishonest Alice attempts to learn both $x_0$ and $x_1$. The cheating probability of Alice is given by
    \begin{equation*}
        P_A^{OT} = \max \Pr[\text{Alice correctly guesses both $x_0$ and $x_1$}],
    \end{equation*}
    where the maximum is taken over all cheating strategies of Alice. Note that $P_A^{OT} \geq 1/2$ since she can simply follow the protocol to learn $x_a$ perfectly (completeness) and can correctly guess $x_{\overline{a}}$ by making a guess uniformly at random.

    \item \emph{Cheating Bob:} A dishonest Bob attempts to learn the bit $a$. The cheating probability of Bob is given by
    \begin{equation*}
        P_B^{OT} = \max \Pr[\text{Bob correctly guesses $a$}],
    \end{equation*}
    where the maximum is taken over all cheating strategies of Bob. Note that $P_B^{OT} \geq 1/2$ since he can correctly guess Alice's bit $a$ by guessing one of the two values uniformly at random.
\end{itemize} 

We note that there are alternative definitions of this task where the inputs are fixed beforehand or possibly generated randomly within the protocol itself. Many of these definitions are equivalent to the one presented above and we refer the reader to~\cite{brassard87,Crepeau1988equivalence,chailloux2013lower,amiri2021imperfect} for proofs and discussions.

We next discuss coin flipping and provide its security definitions.   

\begin{definition}[Coin flipping]\label{def:SCF}
     Coin flipping is the cryptographic task between two parties (Alice and Bob) where they communicate to agree on a common binary outcome ($0$ or $1$). A weak version of this task is a variant where Alice \emph{wins} if the outcome is $1$ while Bob \emph{wins} if the outcome is $0$.
\end{definition}

We consider the following notions of security for a given weak coin flipping protocol.
\begin{itemize}
    \item \emph{Completeness:} If both Alice and Bob are honest, then neither party aborts and the shared outcome is generated uniformly at random. 
    \item \emph{Cheating Alice:} If Bob is honest, then Alice's cheating probability is defined with respect to the outcome $d \in \{0,1\}$ as
    \begin{equation*}
        P_{A,d} = \max \Pr[\text{Alice successfully forces Bob to accept the outcome $d$}],
    \end{equation*}
    where the maximum is taken over all possible cheating strategies of Alice. Note that here $P_{A,d} \geq 1/2$ since Alice can simply choose to follow the protocol honestly to observe the outcome $d$ uniformly at random.
    \item \emph{Cheating Bob:} If Alice is honest, then Bob's cheating probability is defined with respect to the outcome $d \in \{0,1\}$ as
    \begin{equation*}
        P_{B,d} = \max \Pr[\text{Bob successfully forces Alice to accept the outcome $d$}],
    \end{equation*}
    where the maximum is taken over all possible cheating strategies of Bob. As before, here $P_{B,d} \geq 1/2$ since Bob can simply choose to follow the protocol honestly to observe the outcome $d$ uniformly at random. 
\end{itemize}

The variant of the coin flipping task where Alice and Bob do not have any preferred choice of outcome is known as strong coin flipping.
We now state Kitaev's lower bound for strong coin flipping which is useful to deduce lower bounds for other two-party primitives.
\begin{prop}[Kitaev's lower bound~\cite{kitaev2002quantum}]\label{thm:KitaevLB}
    For all quantum coin flipping protocols, we have 
    \begin{equation}
        P_{A,d} P_{B,d} \geq 1/2
    \end{equation}
    where $d \in \{0,1\}$ and $P_{A,d}$ and $P_{B,d}$ are as defined above.
\end{prop}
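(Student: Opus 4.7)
The plan is to reproduce Kitaev's fidelity-based argument for coin flipping lower bounds. First, I would purify the protocol: any coin flipping protocol can be described by alternating unitaries $U_1^A, U_1^B, \ldots, U_n^A, U_n^B$ applied to a fixed initial product state on registers $\mathcal{A} \otimes \mathcal{M} \otimes \mathcal{B}$, followed by local projective measurements $\{\Pi_d^A\}$, $\{\Pi_d^B\}$ deferred to the end. Let $\ket{\psi}$ denote the honest final joint state; completeness of coin flipping requires the parties' outcomes to always agree, each with probability $1/2$, so $\bra{\psi}(\Pi_0^A \otimes \Pi_0^B)\ket{\psi} = 1/2$.

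Next I would set up an inductive fidelity game. For each round $i$, define two joint pure states $\ket{\alpha_i}$ and $\ket{\beta_i}$ on $\mathcal{A} \otimes \mathcal{M} \otimes \mathcal{B}$ representing the optimal cheating trajectories for Alice and for Bob (both aiming at outcome $0$), propagated up to round $i$. These are defined recursively using the protocol's unitaries and Uhlmann's theorem: at rounds where the relevant party acts honestly the state is carried by that honest unitary, while at rounds where the party is free to cheat the state is updated by an arbitrary unitary on its own registers, chosen by Uhlmann so as to maximize alignment with the opposing trajectory. The central technical step is a monotonicity lemma asserting that the squared overlap $|\bracket{\alpha_i}{\beta_i}|^2$ is preserved at honest rounds and can only improve at cheating rounds, because the cheating party can reshape its local state without changing the opponent's marginal.

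Unwinding both endpoints then produces the bound. At the terminal step $n$, the explicit optimality of the cheating strategies combined with a Cauchy--Schwarz argument gives $|\bracket{\alpha_n}{\beta_n}|^2 \leq P_{A,0} \cdot P_{B,0}$. At the initial step $0$, both trajectories are tied to the fixed initial product state and must collectively account for the honest acceptance probability, which via Uhlmann forces $|\bracket{\alpha_0}{\beta_0}|^2 \geq \bra{\psi}(\Pi_0^A \otimes \Pi_0^B)\ket{\psi} = 1/2$. Chaining these two inequalities through the monotonicity lemma gives $P_{A,0} \cdot P_{B,0} \geq 1/2$; the $d = 1$ case is identical up to exchanging projectors.

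I expect the main obstacle to be the monotonicity lemma itself, since it requires carefully coordinating Alice's and Bob's Uhlmann lifts across the alternating rounds so that $|\bracket{\alpha_i}{\beta_i}|^2$ genuinely evolves in the claimed direction, in particular exploiting the freedom to redefine the cheating state without changing the marginal on the \emph{other} party's system. This is the heart of Kitaev's argument; the surrounding steps (protocol purification, endpoint identification, and Cauchy--Schwarz) are standard but the monotone coupling is where the real work lies.
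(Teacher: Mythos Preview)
The paper does not prove this proposition; it is stated as a known result with a citation to~\cite{kitaev2002quantum} and then used as a black box in the lower-bound argument of Section~\ref{sub:lowerBoundROT}. There is therefore no proof in the paper to compare against---your sketch goes well beyond what the paper itself provides.

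On the substance of your sketch: you have the right ingredients (purification, a monotone quantity across rounds, endpoint identifications) but not quite the standard mechanism. Kitaev's argument does not track the overlap $|\bracket{\alpha_i}{\beta_i}|^2$ between two separate \emph{cheating} runs. Instead, semidefinite-programming duality produces positive-semidefinite witnesses $Z_{A,i}$ (dual-feasible for the SDP computing $P_{B,0}$) and $Z_{B,i}$ (dual-feasible for $P_{A,0}$), and one tracks the scalar $\bra{\psi_i}\,Z_{A,i}\otimes Z_{B,i}\,\ket{\psi_i}$ along the \emph{honest} protocol states $\ket{\psi_i}$. The dual constraints force this quantity to be non-increasing in $i$; the product structure of the initial state pins its value at $i=0$ to $P_{A,0}\cdot P_{B,0}$, while the terminal constraints $Z_{A,n}\succeq \Pi_0^A$ and $Z_{B,n}\succeq \Pi_0^B$ bound it below at $i=n$ by $\bra{\psi_n}\,\Pi_0^A\otimes\Pi_0^B\,\ket{\psi_n}=1/2$. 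In your formulation the endpoint roles are swapped and, more seriously, it is not clear that the overlap of two independently-constructed cheating trajectories carries either the claimed boundary values or the claimed monotonicity: at round $0$ both parties may already deviate locally, so $|\bracket{\alpha_0}{\beta_0}|^2$ is not tied to $1/2$ in any obvious way, and at round $n$ there is no direct Cauchy--Schwarz route from $|\bracket{\alpha_n}{\beta_n}|^2$ to $P_{A,0}\cdot P_{B,0}$. The Uhlmann-type step you highlight is indeed where the work lies, but in Kitaev's proof it is applied to the dual witnesses (equivalently, to purifications of the honest marginals), not to a pair of primal cheating states.
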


\section{Security proof of~\Cref{protocol:reductionFromCKS}}\label{appendix:proofReductionToCKS} 

We now provide a proof of~\Cref{lemma:ROTWithCKS}. We reproduce the protocol below, for convenience.

\begin{itemize} 
        \item Alice chooses a bit $a \in \{0,1\}$ uniformly at random and creates the two-qutrit state
            \begin{equation*}
                \ket{\phi_a} = \frac{1}{\sqrt{2}}\ket{aa} + \frac{1}{\sqrt{2}}\ket{22} \in \mathcal{A} \otimes \mathcal{B}.
            \end{equation*}
        \item Alice sends the qutrit $\mathcal{B}$ to Bob.
        \item Bob chooses data bit $y \in \{0,1\}$ and $z \in \{0,1\}$  uniformly at random.
        \item Bob chooses a permutation bit $p \in \{0,1\}$ uniformly at random and assigns $(x_0,x_1) = (y,z)$ if $p = 0$, or $(x_0,x_1) = (z,y)$, otherwise.
        \item Bob applies the unitary
        \begin{equation*}
            U_{x_0x_1} = 
            \begin{bmatrix}
                (-1)^{x_0} & 0 & 0 \\
                0 & (-1)^{x_1} & 0 \\
                0 & 0 & 1
            \end{bmatrix} 
        \end{equation*}
        and sends back $\B$ to Alice.
        \item Alice determines the value of $x_a$ using the two-outcome measurement:
        \begin{equation*}
            \{\Pi_{0} \coloneqq \kb{\phi_a}_{\A \otimes \B}, \Pi_{1} \coloneqq \mathbb{1}_{\A \otimes \B} - \kb{\phi_a}  \}.
        \end{equation*}
        \item Bob sends $p$ to Alice.
        \item If $a = p$, Alice asserts the receipt of $y = x_a$, else she asserts NULL.
\end{itemize}

\begin{lemma}
    The optimal success probability ($P_A^*$) with which dishonest Alice can cheat in~\Cref{protocol:reductionFromCKS} is $\cos^2(\pi/8)$.
\end{lemma}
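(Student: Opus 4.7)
The plan is to finish the argument begun in the excerpt. Since the excerpt has already reduced the analysis to the explicit form $\ket{\psi_{x_0 x_1}} = \alpha(-1)^{x_0}\ket{0} + \beta(-1)^{x_1}\ket{1} + \gamma\ket{2}$ with $\alpha,\beta,\gamma \geq 0$ and $\alpha^2 + \beta^2 + \gamma^2 = 1$, and has expressed Alice's conditional success probability for each value of $p$ in terms of a trace distance, my task is to compute those two trace distances explicitly, average over $p$, and optimize. Because $p$ is chosen uniformly at random by honest Bob, averaging the two case probabilities gives
\[
P_A^* = \frac{1}{2} + \frac{1}{8}\Big(\|\rho_0(0) - \rho_1(0)\|_1 + \|\rho_0(1) - \rho_1(1)\|_1\Big).
\]

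First I would compute $\rho_0(0) - \rho_1(0)$ by writing each $\kb{\psi_{x_0 x_1}}$ as a $3 \times 3$ matrix in the computational basis and exploiting parity cancellations. Summing $\kb{\psi_{x_0 0}} + \kb{\psi_{x_0 1}}$ kills every entry that carries a factor $(-1)^{x_1}$, leaving a matrix whose only off-diagonal entries are $\pm 2\alpha\gamma$ in positions $(0,2)$ and $(2,0)$; taking the $x_0$-difference and dividing by $2$ yields
\[
\rho_0(0) - \rho_1(0) = 2\alpha\gamma(\ket{0}\bra{2} + \ket{2}\bra{0}),
\]
a Hermitian rank-$2$ operator with spectrum $\{2\alpha\gamma,-2\alpha\gamma,0\}$, hence trace norm $4\alpha\gamma$. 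By the symmetry of the construction under interchanging the roles of the first two basis states of $\mathcal{B}$, the analogous calculation for $p = 1$ gives $\rho_0(1) - \rho_1(1) = 2\beta\gamma(\ket{1}\bra{2} + \ket{2}\bra{1})$ and trace norm $4\beta\gamma$.

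Substituting into the displayed formula reduces the problem to maximizing $(\alpha + \beta)\gamma$ over $\alpha^2 + \beta^2 + \gamma^2 = 1$ with $\alpha,\beta,\gamma \geq 0$, since
\[
P_A^* = \frac{1}{2} + \frac{1}{2}(\alpha + \beta)\gamma.
\]
The objective is symmetric in $\alpha$ and $\beta$, so the optimum lies on $\alpha = \beta$; subject to $2\alpha^2 + \gamma^2 = 1$, the AM--GM inequality $(\sqrt{2}\alpha)^2 + \gamma^2 \geq 2\sqrt{2}\alpha\gamma$ yields $2\alpha\gamma \leq 1/\sqrt{2}$, with equality at $\alpha = \beta = 1/2$ and $\gamma = 1/\sqrt{2}$. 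This produces $P_A^* = 1/2 + 1/(2\sqrt{2}) = (1 + \cos(\pi/4))/2 = \cos^2(\pi/8)$, matching the claimed optimum.

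The main (minor) obstacle is bookkeeping in the trace-norm calculation: the key observation that simplifies everything is that each difference $\rho_0(p) - \rho_1(p)$ is supported on a single $2 \times 2$ off-diagonal block of a $3 \times 3$ matrix, so its eigenvalues can be read off without further work. The only conceptual check is that the restriction to non-negative real amplitudes made in the excerpt is genuinely without loss of generality, which follows because any phases on the three amplitudes, as well as any choice of auxiliary states $\ket{\psi_\alpha},\ket{\psi_\beta},\ket{\psi_\gamma}$ on $\mathcal{A}'$, can be absorbed into a unitary on $\mathcal{A}'$ applied before Alice's final measurement and therefore do not affect her probability of outputting $x_a$ correctly.
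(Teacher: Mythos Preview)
Your proposal is correct and follows essentially the same approach as the paper's proof: both compute $\rho_0(p)-\rho_1(p)$ explicitly as a rank-two off-diagonal block with eigenvalues $\pm 2\alpha\gamma$ (respectively $\pm 2\beta\gamma$), average over $p$ to obtain $\tfrac{1}{2}+\tfrac{1}{2}(\alpha+\beta)\gamma$, and optimize to $\alpha=\beta=1/2$, $\gamma=1/\sqrt{2}$. Your AM--GM justification of the final optimization is slightly more explicit than the paper's ``simple optimization'' remark, but the argument is otherwise the same.
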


\begin{proof}
We have already argued in the main text that the most general strategy for Alice has her creating the state 
\begin{equation} 
\alpha \ket{0} + \beta \ket{1} + \gamma \ket{2}, 
\end{equation} 
where $\alpha, \beta, \gamma$ are nonnegative and satisfy $\alpha^2 + \beta^2 + \gamma^2 = 1$, 
and sending this qutrit to Bob. 
After Bob's unitary, Alice now has the state
\begin{equation} 
\ket{\psi_{x_0, x_1}} = \alpha (-1)^{x_0} \ket{0} + \beta (-1)^{x_1} \ket{1} + \gamma \ket{2}, 
\end{equation} 
which encodes his two bits. 
Since Alice wants to learn $y$, she should wait until $p$ is revealed which tells here whether she wants to learn $x_0$ or $x_1$ (that is, which bit of $(x_0, x_1)$ is really $y$). 
If $p=0$, that is, Alice wants to learn $x_0$, and she can do this with maximum probability given by 
\begin{equation} 
\frac{1}{2} + \frac{1}{4} \| \rho_0(0) - \rho_1(0) \|_1, 
\end{equation} 
where the norm is the trace norm and 
$\rho_0(0) = \frac{1}{2} \kb{\psi_{00}} + \frac{1}{2} \kb{\psi_{01}}$ is the encoding with $x_0 = 0$ and 
$\rho_1(0) = \frac{1}{2} \kb{\psi_{10}} + \frac{1}{2} \kb{\psi_{11}}$ is the encoding with $x_0 = 1$.  
A quick calculation shows that 
\begin{equation} 
\rho_0(0) - \rho_1(0) = \begin{pmatrix}
    0 & 0 & 2\alpha\gamma \\ 
    0 & 0 & 0 \\ 
    2\alpha\gamma & 0 & 0 
\end{pmatrix} 
\end{equation} 
which has at most two nonzero eigenvalues, $\pm 2 \alpha \gamma$. 
Therefore, Alice can cheat with maximum probability 
\begin{equation} 
\frac{1}{2} + \alpha \gamma  
\end{equation} 
in this case.
 
If $p=1$, that is, Alice wants to learn $x_1$, and she can do this with maximum probability given by 
\begin{equation} 
\frac{1}{2} + \frac{1}{4} \| \rho_0(1) - \rho_1(1) \|_1, 
\end{equation} 
where 
$\rho_0(1) = \frac{1}{2} \kb{\psi_{00}} + \frac{1}{2} \kb{\psi_{10}}$ is the encoding with $x_1 = 0$ and 
$\rho_1(1) = \frac{1}{2} \kb{\psi_{01}} + \frac{1}{2} \kb{\psi_{11}}$ is the encoding with $x_1 = 1$.  
A quick calculation shows that 
\begin{equation} 
\rho_0(1) - \rho_1(1) = \begin{pmatrix}
    0 & 0 & 0 \\ 
    0 & 0 & 2 \beta \gamma \\ 
    0 & 2 \beta \gamma & 0 
\end{pmatrix} 
\end{equation} 
which has at most two nonzero eigenvalues, $\pm 2 \beta \gamma$. 
Therefore, Alice can cheat with maximum probability 
\begin{equation} 
\frac{1}{2} + \beta \gamma 
\end{equation} 
in this case.

Since the choice of $p$ is uniformly random, Alice's overall cheating probability is given as 
\begin{equation} 
\frac{1}{2} + \frac{1}{2} \beta \gamma + \frac{1}{2} \alpha \gamma. 
\end{equation} 
A simple optimization over the parameters $\alpha, \beta, \gamma \geq 0$ satisfying $\alpha^2 + \beta^2 + \gamma^2 = 1$ yeilds that $\alpha = \beta = 1/2$ and $\gamma = 1/\sqrt{2}$ is the optimal solution from which we can calculate the optimal cheating probability as $\frac{1}{2} + \frac{1}{2 \sqrt{2}} = \cos^2(\pi/8)$, as desired. 

\end{proof}

\begin{lemma}
    The optimal success probability ($P_B^*$) with which dishonest Bob can cheat in~\Cref{protocol:reductionFromCKS} is $3/4$.
\end{lemma}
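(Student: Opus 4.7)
The plan is to reduce Bob's cheating task to a quantum state discrimination problem for Alice's uniformly random bit $a$, and then apply the Helstrom bound to two easily-computed reduced states.

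First, I will observe that Alice's assertion is determined entirely by $a$ and $p$: she asserts ``receipt'' exactly when $a = p$ and NULL otherwise, and her own measurement outcome (her inferred value of $x_a$) plays no role in this decision. Since Bob freely chooses $p$ based on whatever he has learned, his optimal strategy is to first form an estimate $\hat{a}$ of $a$ and then either set $p = \hat{a}$ together with the prediction ``receipt'', or $p = \overline{\hat{a}}$ together with the prediction ``NULL''. Either way his prediction is correct exactly when $\hat{a} = a$, so $P_B^*$ equals the maximum probability with which Bob can guess $a$ from his view.

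Next, the only information Bob has about $a$ is carried by the qutrit $\mathcal{B}$ that Alice sent, whose reduced state is
\begin{equation*}
\rho_a \;=\; \Tr_{\mathcal{A}} \kb{\phi_a} \;=\; \tfrac{1}{2}\kb{a} + \tfrac{1}{2}\kb{2}.
\end{equation*}
Applying the Helstrom bound then yields
\begin{equation*}
\max \Pr[\hat{a} = a] \;=\; \tfrac{1}{2} + \tfrac{1}{4}\left\|\rho_0 - \rho_1\right\|_1 \;=\; \tfrac{1}{2} + \tfrac{1}{4}\left\|\tfrac{1}{2}\kb{0} - \tfrac{1}{2}\kb{1}\right\|_1 \;=\; \tfrac{3}{4},
\end{equation*}
which upper bounds $P_B^*$.

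Finally, I will exhibit a matching strategy: Bob measures $\mathcal{B}$ in the computational basis $\{\ket{0},\ket{1},\ket{2}\}$; outcomes $0$ and $1$ occur only under $\rho_0$ and $\rho_1$ respectively and unambiguously reveal $a$, while outcome $2$ is equally likely under both states and Bob guesses uniformly. The success probability is $\tfrac{1}{2}\cdot 1 + \tfrac{1}{2}\cdot\tfrac{1}{2} = \tfrac{3}{4}$. Bob then sends any fixed qutrit back to Alice (the choice is immaterial, as argued above) and picks $p$ consistently with $\hat{a}$. The only conceptually delicate step is the initial reduction, specifically the recognition that Alice's assertion is decoupled from her measurement outcome; once that is in hand, the Helstrom calculation and the matching attack are routine, so I do not expect any substantive obstacle.
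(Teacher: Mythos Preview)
Your proposal is correct and follows essentially the same approach as the paper: reduce Bob's cheating to learning Alice's bit $a$, then apply the Holevo--Helstrom bound to the reduced states $\rho_a = \Tr_{\mathcal{A}}\kb{\phi_a}$. You supply more detail than the paper does---explicitly justifying why Alice's assertion depends only on $(a,p)$ and not on her measurement outcome, computing the reduced states, and exhibiting a concrete optimal attack---but the underlying argument is identical.
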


\begin{proof}
    It is necessary and sufficient for dishonest Bob in~\Cref{protocol:reductionFromCKS} to simply learn Alice's bit $a$ as Bob can then set his choice for $p$ accordingly, thereby successfully guessing Alice's assertion of receiving $y$ or NULL. 
    Bob can maximally distinguish the two states $\Tr_{\A}(\kb{\phi_0})$ and $\Tr_{\A}(\kb{\phi_1})$ in the Holevo-Helstrom basis resulting in $P_B^* = 3/4$.
\end{proof}
 
\section{Security proof of~\Cref{protocol:ROTFrom12ASR}}\label{appendix:proofReductionToASR} 

We next provide a proof of~\Cref{lemma:ROTFrom12ASR}.

\subsection{Cheating Alice} 

\begin{theorem}
    The optimal success probability ($P_A^*$) with which dishonest Alice can cheat in~\Cref{protocol:ROTFrom12ASR} is $\frac{1}{4}(2+\sqrt{2})$ for all $N$.
\end{theorem}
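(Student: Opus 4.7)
The plan is to reduce cheating Alice's problem to a single-round quantum state discrimination problem and then compute the relevant trace distance explicitly.

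First I would argue that the $N$ rounds are effectively independent from Bob's perspective: each pair $(x_0^j,x_1^j)$ and each bit $p^j$ is chosen independently and uniformly, and the test subroutine concerns only the rounds in $S$ and reveals nothing correlated with non-test rounds. Tracing out all other rounds (and any of Alice's ancillas) therefore shows that her marginal success probability on a fixed target round $j$ is upper bounded by her optimum over single-round strategies. From here on I focus on one round and drop the index $j$.

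Next I would invoke a deferred-measurement argument: since Bob is honest and sends no message between delivering $\ket{\Psi_{x_0 x_1}}$ and announcing $p$, Alice may keep the two qubits in quantum memory and postpone every measurement until after $p$ is revealed. This strategy subsumes every earlier-committing strategy, so upper-bounding it suffices. After $p$ is announced, her task is exactly a two-state discrimination with equal prior between $\sigma_0^{(p)}$ and $\sigma_1^{(p)}$, where for $p=0$ we have $\sigma_b^{(0)}=\tfrac{1}{2}(\kb{\Psi_{b0}}+\kb{\Psi_{b1}})$ and for $p=1$ analogously $\sigma_b^{(1)}=\tfrac{1}{2}(\kb{\Psi_{0b}}+\kb{\Psi_{1b}})$. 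By Holevo--Helstrom, her optimal success equals $\tfrac{1}{2}+\tfrac{1}{4}\|\sigma_0^{(p)}-\sigma_1^{(p)}\|_1$, with the spectral measurement of $\sigma_0^{(p)}-\sigma_1^{(p)}$ providing the matching lower bound.

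The remaining task is a trace-distance computation, which by the symmetry $p=0\leftrightarrow p=1$ (conjugate by Hadamards on both qubits) only needs to be done once, say for $p=0$, where $\sigma_0^{(0)}=\tfrac{1}{2}(\kb{00}+\kb{++})$ and $\sigma_1^{(0)}=\tfrac{1}{2}(\kb{--}+\kb{11})$. Using the qubit-swap symmetry, $\sigma_0^{(0)}-\sigma_1^{(0)}$ vanishes on the antisymmetric subspace and reduces to a $3\times 3$ symmetric matrix on the symmetric basis $\{\ket{00},(\ket{01}+\ket{10})/\sqrt{2},\ket{11}\}$, whose characteristic polynomial I expect to factor as $-\lambda(\lambda^2-1/2)$, giving eigenvalues $\{0,\pm 1/\sqrt{2}\}$ in that block. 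As a cross-check, the overlaps $|\bracket{00}{\pm\pm}|^2=|\bracket{11}{\pm\pm}|^2=1/4$ together with $\bracket{00}{11}=\bracket{++}{--}=0$ give $\Tr[(\sigma_0^{(0)})^2]=\Tr[(\sigma_1^{(0)})^2]=5/8$ and $\Tr[\sigma_0^{(0)}\sigma_1^{(0)}]=1/8$, so $\Tr[(\sigma_0^{(0)}-\sigma_1^{(0)})^2]=1$; combined with tracelessness and rank two this also forces eigenvalues $\pm 1/\sqrt{2}$. Either route gives $\|\sigma_0^{(0)}-\sigma_1^{(0)}\|_1=\sqrt{2}$, so
\[
P_A^* \;=\; \tfrac{1}{2}+\tfrac{\sqrt{2}}{4} \;=\; \tfrac{2+\sqrt{2}}{4} \;=\; \cos^2(\pi/8).
\]

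The step I expect to be most delicate is the first one: formally ruling out that Alice gains by entangling ancillas across rounds and conditioning her round-$j$ guess on test-phase reveals and on her (possibly adversarial) choice of $S$. The cleanest way I see is to observe that, conditioned on the entire test transcript and on $S$, Bob's reduced state on each non-test round is still the i.i.d.\ mixture $\tfrac{1}{4}\sum_{x_0,x_1}\kb{\Psi_{x_0 x_1}}$ and the future reveal $p^j$ is uniform and independent of Alice's side information; marginalizing Alice's full strategy to round $j$ therefore yields a valid single-round strategy, to which the single-round Holevo--Helstrom bound derived above directly applies.
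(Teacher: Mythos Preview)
Your proposal is correct and follows essentially the same approach as the paper: wait for Bob's announcement, reduce to a single-round minimum-error discrimination between the two mixtures in~\eqref{eq:AliceCheatingASR}, and apply Holevo--Helstrom. You are in fact more careful than the paper on two points---you justify the cross-round independence (which the paper handles with the terse and slightly inaccurate ``Alice sends nothing to Bob,'' ignoring that she does send $S$) and you compute the trace norm explicitly rather than just asserting the result.
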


\begin{proof}
In this protocol Alice only receives states, and sends nothing to Bob. This means that the only cheating method available to Alice is to use a measurement which is different from what she would measure if honest. As Bob decides which of the two bits that counts for each of the remaining $N - \left \lfloor \sqrt{N}  \right \rfloor \approx N$ states, Alice's best strategy is to store these states until Bob declares which bit value counts. Alice then makes a minimum-error measurement individually on each state, aiming to learn the bit that Bob chose. Due to symmetry, Alice's probability of correctly obtaining the first bit is the same as for the second. When Bob declares that the first bit should be used, the two states Alice has to distinguish between are given by
\begin{equation}
\rho_0=\frac{1}{2}(\kb{00}+\kb{++}), \quad \rho_1=\frac{1}{2}(\kb{11}+\kb{--}).
\label{eq:AliceCheatingASR}
\end{equation}
Here Alice's best strategy is to make a minimum-error measurement on each instance individually. Since there are only two states to distinguish between, we use the standard result for the probability of success, the so-called Holevo-Helstrom measurement, yielding 
\begin{equation}
p_s=\frac{1}{2}(1+\Tr(|p_0\rho_0-p_1\rho_1|)),
\end{equation}
where $|A|=\sqrt{A^\dagger A}$. This gives Alice's cheating probability as
\begin{equation}
P_A^*=\frac{1}{4}(2+\sqrt{2})\approx0.853
\end{equation}
for all choices of $N$.
\end{proof}

\subsection{Cheating Bob}  

\begin{lemma}
    The optimal success probability ($P_B^*$) with which dishonest Bob can cheat in~\Cref{protocol:ROTFrom12ASR} is $\frac{3}{4}$ as $N\rightarrow\infty$.
\end{lemma}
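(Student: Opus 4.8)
The plan is to reduce cheating Bob in the $N$-block protocol to a single-block game and then solve that game by a Holevo--Helstrom argument, in the same spirit as the cheating-Bob analysis of \Cref{protocol:reductionFromCKS}; the genuinely new part is controlling the cut-and-choose test together with the abort event.

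First I would write down Bob's most general strategy: he prepares an arbitrary, possibly block-entangled, state on $\mathcal{A}'\otimes\mathcal{B}_1\cdots\mathcal{B}_N$, sends the $\mathcal{B}_i$, and after learning Alice's uniformly random test set $S$ he announces, for $j\in S$, purported states chosen to maximise his pass probability, and for $j\notin S$ declares the counting bits $c_j$ adversarially and then guesses each assertion $a_j$ from his residual system and all classical data. The structural observation is that Alice's ROT-phase measurement on a block is the fixed product measurement ($Z$ on the first qubit, $X$ on the second), so the bit Alice learns is read off from the eigenvalue of $Z\otimes X$; for each of the four honest states one has $\langle Z\otimes X\rangle=0$, so honest-looking blocks leak nothing about $a_j$, and to make Alice's learned bit predictable Bob must send states with support outside the honest (symmetric) subspace --- which is exactly what the consistency test on the $\lfloor\sqrt N\rfloor$ blocks of $S$ is designed to penalise. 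Equivalently, this is the cheating-Bob analysis of the underlying $1$-out-of-$2$ OT of~\cite{amiri2021imperfect}, augmented by the observation that Bob's later ``which bit counts'' declaration gives him nothing beyond knowing Alice's induced choice.

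Next I would pass to a single block. Because $S$ is uniform, Bob's strategy can be symmetrised over $\mathrm{Sym}(N)$, and then a de Finetti / martingale argument for the abort event reduces the analysis --- up to an error that vanishes as $N\to\infty$ --- to the following single-block game: Bob submits a two-qubit state together with a purifying register; with probability $\lfloor\sqrt N\rfloor/N$ the block is tested, and Bob must pass or else the protocol aborts and he loses, while otherwise the block is used for the ROT and Bob's success at guessing the assertion equals the optimal probability of distinguishing the two assertion-conditioned states of his residual register, a $\tfrac12+\tfrac14\lVert\cdot\rVert_1$-type quantity. Optimising over Bob's state and over his announced states, and then letting $N\to\infty$, the value of this game is $3/4$, attained by states interpolating on the Bloch sphere between two honest states; playing this same symmetric strategy across all blocks produces a matching lower bound $P_B^*\to 3/4$ from below (and hence $\gamma_B\to 3/2$).

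The step I expect to be the main obstacle is this single-block reduction with the abort accounting: one must argue that neither entanglement across blocks nor an adaptive choice of announced states lets Bob beat the symmetric single-block strategy, and one must weigh the cut-and-choose contribution correctly --- a detected inconsistency destroys \emph{all} output blocks, so the test still matters even though $\lfloor\sqrt N\rfloor/N\to 0$ --- so that the limiting value is exactly $3/4$ and not some larger constant. A secondary subtlety is that the single-block optimisation is not a plain state-discrimination SDP: Bob's freedom to send states outside the honest subspace has to be traded off against the test's rejection probability, so the optimisation must be arranged so that the Holevo--Helstrom value $3/4$ is what survives the $N\to\infty$ limit.
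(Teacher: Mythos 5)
Your opening observation --- that Bob's goal of guessing Alice's assertion bit reduces to predicting Alice's induced index in the underlying $1$-out-of-$2$ OT, because Bob's own ``which bit counts'' declaration simply XORs against that index --- is exactly the paper's argument, and it is what makes the proof go through. The paper's proof of this lemma is essentially \emph{only} that reduction: it observes that cheating Bob here faces the identical task as cheating Bob in the OT protocol of~\cite{amiri2021imperfect}, cites that paper's result that $P_B^{OT}=3/4$ in the limit $N\to\infty$, and notes that the vanishing tested fraction makes the overhead negligible. Everything past your first paragraph --- the symmetrisation/de Finetti reduction to a single block, the accounting for the global abort event, the Bloch-sphere optimisation, and the claim that the single-block game value is $3/4$ --- is a from-scratch re-derivation that the paper does not attempt and does not need. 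That material, if completed, would make the lemma self-contained; as written, however, you assert the game value $3/4$ without deriving it and explicitly flag the single-block reduction and the abort bookkeeping as unresolved obstacles. So the part of your argument that is needed and correct coincides with the paper's; the rest is an (incomplete) sketch of an independent re-proof of the input bound. One small imprecision worth noting: your claim that honest-looking blocks leak nothing because $\langle Z\otimes X\rangle=0$ on the four honest states is accurate, but Bob's cheating in general involves retaining entanglement with the transmitted qubits, so the relevant quantity is the conditional state on Bob's purifying register given Alice's $Z\otimes X$ outcome, not just the expectation on what he sent; your later text does treat it this way, but the two framings should be reconciled if you intend to flesh out the sketch.
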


\begin{proof}
The proof follows the same argument as before. A cheating Bob in the 1-out-of-2 protocol wants to know which bit Alice received. This is exactly what he needs to know whether or not Alice has received the bit value in the Rabin protocol we have constructed as well. Therefore, Bob’s cheating probability is 
\begin{equation}
P_B^*=\frac{3}{4},
\end{equation}
which follows directly from~\cite{amiri2021imperfect}, 
and is only valid in the limit as $N\rightarrow\infty$. This limit assures the fraction tested becomes negligible disallowing further cheating possibilities for Bob.
\end{proof} 

\section{Security proof of~\Cref{protocol:RedRabinToCF}}\label{appendix:proofReductionToSCF}

We now prove the security of the coin flipping protocol with respect to the security of the ROT subroutine. 
We reproduce the protocol below, for convenience. 

    \begin{itemize} 
    \item Alice and Bob perform an ROT protocol with Bob's data bit denoted by $y$. 
    We denote Alice's ROT output by the trit $\hat{g} \in \{ 0, 1, NULL \}$. 
    \item Alice defines the assert bit $a \in \{ 0, 1 \}$ to be $0$ if $\hat{g} \in \{ 0, 1 \}$, and to be $1$ if $\hat{g} = NULL$.             
    \item Bob chooses $b$ uniformly at random and sends $b$ to Alice. 
    \item Alice sends $a$ and $\hat{g}$ to Bob.
            Bob checks if $(a,y,\hat{g})$ are consistent, i.e., he checks if $\hat{g} = y$ when $a = 0$.
                If $\hat{g} \neq y$, Bob aborts. 
    \item If Alice and Bob do not abort, then they both output HEADS if $a = b$, and TAILS otherwise. 
    \end{itemize} 

\subsection{Cheating Alice} 

Since Alice wants the outcome HEADS, she must send back $a = b$ in the last message. 
However, this is only accepted if $(a, y, \hat{g})$ are consistent. 

If $b = 0$, then Alice must return $a = 0$, in which case Bob accepts the value of $a$ if $\hat{g} = y$. 
The maximum probability that Alice can pass the test in this case is the maximum probability that she can learn $y$ in the ROT protocol, which is precisely $P_A^*$. 

If $b = 1$, then Alice must return $a = 1$, but in this case, Bob is expecting $\hat{g} = NULL$, and thus if she sends this, Bob has nothing to check, and will simply accept the value of $a$. 

Since $b$ is chosen independently from the rest of the protocol, we have that $P_{A,0} = \frac{1}{2} + \frac{1}{2} P_A^*$, as desired. 

\subsection{Cheating Bob} 
Since Bob want HEADS, he must infer the value of $a$ ahead of time when he sends $b$. 
In other words, he must send $b = a$. 
Therefore, he is successful in forcing Alice to output HEADS with the same probability with which he can learn $a$ in the protocol. 
Since the value of $a$ corresponds to exactly whether or not Alice asserts the value of Bob's bit $y$, we have that the maximum probability with which Bob learns $a$ in the coin flipping protocol is simply $P_B^*$. 

\end{document}